\newcommand{\factive}{f_a}
\def\reviewing{}  
 \newcommand{\kartik}[1]{{\color{magenta} \footnotesize[Kartik: #1] }}
 \newcommand{\DM}[1]{{\color{blue} \footnotesize[Dahlia: #1] }}
 \newcommand{\oded}[1]{{\color{red} \footnotesize[Oded: #1] }}
 \newcommand{\andy}[1]{{\color{cyan} \footnotesize[Andy: #1] }}
 \newcommand{\kartik}[1]{}
 \newcommand{\DM}[1]{}
 \newcommand{\oded}[1]{}
 \newcommand{\andy}[1]{}
\title{Lumiere: Making Optimal BFT for Partial Synchrony Practical}
\author{Andrew Lewis-Pye}
\affiliation{%
\institution{ \institution{London School of Economics}}
}
\author{Dahlia Malkhi}
\affiliation{%
  \institution{
    \institution{UC Santa Barbara}}
}
\author{Oded Naor}
\affiliation{%
  \institution{
    \institution{ StarkWare}}
}
\author{Kartik Nayak}
\affiliation{%
  \institution{
    \institution{Duke University}}
}
\date{Jan 2023}
\begin{abstract} 
The \emph{view synchronization} problem lies at the heart of many Byzantine Fault Tolerant (BFT) State Machine Replication (SMR)  protocols in the partial synchrony model, since these protocols are usually based on \emph{views}. Liveness is guaranteed if honest processors spend a sufficiently long time in the same view during periods of synchrony, and if the leader of the view is honest. 
Ensuring that these conditions occur, known as \emph{Byzantine View Synchronization (BVS)}, has turned out to be the performance bottleneck of many BFT SMR protocols.

A recent line of work \cite{L22, opodis22} has shown that, by using an appropriate view synchronization protocol, BFT SMR  protocols can achieve $O(n^2)$ communication complexity in the worst case after GST, thereby finally matching the lower bound established by Dolev and Reischuk in 1985 \cite{dolev1985bounds}. However, these protocols suffer from two major issues, hampering practicality: 
\begin{enumerate} 
\item[(i)] When implemented so as to be \emph{optimistically responsive}, even a single Byzantine processor may infinitely often cause $\Omega(n\Delta)$ latency between consecutive consensus decisions. 
\item[(ii)] Even in the absence of Byzantine action, infinitely many views require honest  processors to send $\Omega(n^2)$ messages.
\end{enumerate} 
Here, we present Lumiere, an optimistically responsive BVS protocol which maintains optimal worst-case communication complexity while simultaneously addressing the two issues above:  for the first time,  Lumiere enables BFT consensus solutions in the partial synchrony setting that have $O(n^2)$ worst-case communication complexity, and that eventually always (i.e., except for a small constant number of ``warmup'' decisions)
have communication complexity and latency which is linear in the number of actual faults in the execution. 
 \end{abstract}
\begin{document}

\maketitle

\section{Introduction} \label{intro} 

State machine replication (SMR) is a central topic in distributed computing and refers to the task of implementing a fault-tolerant service by replicating servers and coordinating client interactions with server replicas~\cite{schneider1990implementing}. Driven partly by high levels of investment in `blockchain' technology, recent years have seen interest in developing SMR protocols that work efficiently at scale~\cite{cohen2021byzantine}. In concrete terms, this means looking to minimize the latency and the communication complexity per consensus decision as a function of the number of processors (participants) $n$.

SMR protocols typically aim to achieve Byzantine fault tolerance, i.e., consensus among processors (server replicas) even if a bounded proportion of the processors are Byzantine and behave arbitrarily/maliciously. The partial synchrony model~\cite{DLS88} is a common networking model on which many of these protocols are based, and this model can be seen as a practical compromise between the synchronous and asynchronous communication models. This model assumes a  point in time called the global stabilisation time (GST) such that any message sent at time $t$ must arrive by time $\max\{\text{GST},t\} + \Delta$. While $\Delta$ is known, the value of GST is unknown to the protocol. \emph{Optimal resiliency} in the partial synchrony communication model means tolerating up to $f$ Byzantine processors among $n$ processors, where $f$ is the largest integer less than $n/3$ \cite{DLS88}.

\vspace{0.2cm} 
\noindent \textbf{The view-based paradigm and view synchronization in partial synchrony}.
Many BFT SMR  protocols~\cite{castro1999practical,yin2019hotstuff,kotla2007zyzzyva,malkhi2023hotstuff} in the partial synchrony model employ a \emph{view-based} paradigm.
The instructions for such protocols are divided into \emph{views}, each view having a designated \emph{leader} to drive progress. 
A consensus decision is guaranteed to be reached during periods of synchrony whenever honest  processors spend a sufficiently long time together in any view with an honest  leader. 
The problem of ensuring that processors synchronize for long enough in the same view is known as the \emph{Byzantine View Synchronization (BVS)} problem.

HotStuff~\cite{yin2019hotstuff} was the first BFT SMR protocol to decouple the core consensus logic from a ``Pacemaker'' module that implements BVS, but left the pacemaker implementation unspecified. 
The core consensus logic in HotStuff requires  quadratic communication complexity in the worst case and linear latency, while each view requires only linear complexity and a constant number of rounds. The task of view synchronization therefore becomes the efficiency bottleneck and the key question becomes: \emph{Can we design a Byzantine view synchronization protocol with optimal communication complexity and  latency?}

The desired communication complexity and latency requirements need some elaboration. Worst-case complexity should be $O(n^2)$ to meet the  Dolev-Reischuk bound~\cite{dolev1985bounds}. However, when there are fewer actual faults $f_a \leq f$, one would hope for the worst-case complexity between every pair of consensus decisions to be a function of $f_a$ such that complexity is $o(n^2)$ when $f_a=o(n)$. Likewise, whereas worst-case latency is $O(n\Delta)$, one would hope for latency $o(n\Delta)$ in the face of a small number $f_a$ of actual faults. Furthermore, given that the core consensus logic can provide \emph{optimistic responsiveness}, such a property would be desirable in BVS also. Roughly, this means that the protocol should function at `network speed' if it turns out that the actual number of faults $f_a\leq f$ is 0: if $f_a=0$, the protocol should be live during periods when message delay is less than the given bound $\Delta$, and latency should be a function of the actual (unknown) message delay $\delta$. This is important because the actual message delay $\delta$ may be much smaller than $\Delta$ when the latter value is conservatively set to ensure liveness under a wide range of network conditions. More formally, we can say that a protocol is optimistically responsive if, subsequent to some finite time after GST,  the worst-case latency between synchronized views with honest leaders (each of which will produce a consensus decision) is  $O(\delta)$ in the case that $f_a=0$. Generalizing this, a protocol is \emph{smoothly optimistically responsive} if the corresponding bound is  $O(\Delta f_a +\delta)$ for any value of $f_a\leq f$.

Recent works have addressed the Byzantine view synchronization problem and achieved some of the above described goals~\cite{Cogsworth21,NK20,L22, opodis22,disc22}. However, all of those works either tolerate only benign failures~\cite{Cogsworth21}, do not obtain optimal worst-case communication and latency~\cite{NK20}, require a large communication complexity and latency even when there are a small number of faults~\cite{L22,opodis22,disc22}, or require stronger assumptions such as clock synchronization~\cite{lewis2023fever}. 
Lumiere is the first Byzantine view synchronization protocol in the partial synchrony model to achieve all of these properties simultaneously. They key result is the following:

\begin{theorem} \label{maintheorem} 
Lumiere is a BVS protocol for the partial synchrony model with the following properties: 
\begin{enumerate} 
\item Worst-case communication complexity $O(n^2)$. 
\item Worst-case latency $O(n\Delta)$. 
\item The protocol is smoothly optimistically responsive. 
\item Eventual worst case communication complexity $O(f_an+n)$. 
\end{enumerate} 

\end{theorem}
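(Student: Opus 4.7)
The plan is to decompose the four claims of Theorem~\ref{maintheorem} and establish them via two unifying invariants typical of view-synchronization arguments: a \emph{view-entry} invariant (every view that an honest processor enters is eventually either entered by all honest processors or universally abandoned), and a \emph{view-stability} invariant (after GST, once honest processors synchronize in a view with an honest leader, they remain there long enough for the core consensus logic to produce a decision). Given these, the theorem reduces to bounding the per-view message count, the number of views needed to align a quorum of honest processors on an honest leader, and the time spent in each view.

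Properties (1) and (2) should follow by a standard worst-case analysis. I expect the protocol to incur at most $O(n^2)$ messages in a view in which a Byzantine leader forces an all-to-all timeout or complaint round, and at most $O(n)$ consecutive views before rotation reaches an honest leader. Combined with an $O(\Delta)$ per-view latency bound supplied by the view-stability invariant, this yields worst-case latency $O(n\Delta)$ and total messages $O(n^2)$. For property (3), smooth optimistic responsiveness, I would argue that inside an honest-leader view after GST, honest-to-honest messages arrive within the actual delay $\delta$, so the leader commits as soon as it collects a quorum of acknowledgements; Byzantine processors can only delay this by withholding, and each withholding costs at most the conservative timeout $\Delta$. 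Since at most $f_a$ processors can withhold, the latency bound $O(\Delta f_a + \delta)$ follows directly.

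The main obstacle is property (4), the eventual $O(f_a n + n)$ bound. I would prove it in two stages. First, isolate a finite ``warmup'' prefix of views after GST during which quadratic communication is tolerated; here processors may still be converging on their view of the currently faulty set. Second, argue that past the warmup, any Byzantine action inflating a view's cost above the $O(n)$ baseline produces a durable, signed accusation certificate which is charged once against the $O(f_a n)$ budget and, crucially, suppresses the offending processor's ability to force the same inflation in subsequent views. The delicate part will be ruling out the possibility that Byzantine processors manufacture \emph{infinitely many} distinct quadratic views by rotating misbehaviour amongst themselves: I would expect this to require showing that every triggering action is either uniquely attributable to a specific processor (and hence counted at most a constant number of times per faulter across all future views), or is itself curtailed by the protocol's local rate limits. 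That attribution-and-accounting argument is, in my view, the technical heart of the proof, and will lean heavily on whatever signed-evidence mechanism Lumiere embeds into its pacemaker messages.
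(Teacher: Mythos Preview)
Your proposal misidentifies the mechanism behind Lumiere, and the gap is most acute for property~(4). There is \emph{no} accusation, attribution, or suppression machinery in the protocol: no signed-evidence certificates are accumulated against misbehaving processors, and nothing is ``charged once against an $O(f_a n)$ budget''. The only sources of $\Theta(n^2)$ communication are the all-to-all $\mathtt{epoch\ view}$ rounds at epoch boundaries, and these are eliminated not by blacklisting faulty processors but by a \emph{success criterion}: if an honest processor observes that $2f+1$ distinct leaders each produced enough QCs in epoch~$e$, it treats the next epoch boundary as an ordinary initial view and skips the heavy synchronization. The real work is a clock-gap argument (Lemmas~\ref{nodecrease}, \ref{largebound}, \ref{shorten}, \ref{lastleader}, \ref{finlem2}): one shows that the $(f{+}1)$-st honest gap $\mathtt{hg}_{f+1,t}$ cannot grow within an epoch, that every honest leader who produces QCs after GST \emph{shrinks} this gap by at least $\Gamma/2$, and hence that within a constant number of epochs after GST the gap drops below $\Gamma$, after which every honest leader succeeds, the success criterion is always met, and no honest processor ever again sends an $\mathtt{epoch\ view}$ message. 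Your ``rotating misbehaviour'' worry is handled not by attribution but by the observation that satisfying the success criterion forces at least $f{+}1$ \emph{honest} leaders to have produced QCs, each of which contracts the gap.

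For properties (1)--(3) your sketch is also off in its specifics. The $O(n^2)$ worst-case bound is not ``per view with a Byzantine leader forcing all-to-all''; it is the cost of a single heavy epoch change amortized over $\Theta(n)$ views in the epoch, and the $O(n\Delta)$ latency comes from bounding the time any honest processor spends in an epoch (Lemmas~\ref{catchup}, \ref{infepochs}). Smooth responsiveness is not about Byzantine processors withholding acknowledgements but about the Fever-style clock-bumping: a QC for view $v$ bumps every honest clock to $\mathtt{c}_{v+1}$, so honest leaders advance at speed $\delta$, while each faulty leader can waste at most $2\Gamma=O(\Delta)$ before clocks roll past its two views. Your ``view-entry/view-stability invariants'' are too generic to capture any of this; the proof genuinely depends on the epoch structure, the honest-gap quantity, and the interaction between the success criterion and the TC/EC thresholds.
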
 

All terms described in Theorem~\ref{maintheorem} are formally defined in Section~\ref{modelsec}. Table~\ref{tab:results-summary} compares the relevant performance measures for state-of-the-art protocols. The comparison is described in more detail in Section~\ref{sec:related-work}. 

\vspace{0.2cm}We emphasize that Lumiere is the first BVS protocol for partial synchrony that satisfies (1) and (2) from the statement of Theorem \ref{maintheorem} while also satisfying \emph{either} (3) or (4) (and, in fact, satisfies both of these properties). The basic approach to describing a protocol that is smoothly optimistically responsive while satisfying properties (1) and (2) is to combine techniques from LP22 \cite{L22} and Fever \cite{lewis2023fever}. LP22 achieves worst-case communication complexity $O(n^2)$ by dividing the instructions into \emph{epochs}, where each epoch consists of $f+1$ views. By performing a heavy ($\Theta(n^2)$ communication complexity) synchronization process only once at the beginning of each epoch, one can avoid the need for synchronization within epochs, thereby matching the  Dolev-Reischuk bound~\cite{dolev1985bounds}. LP22 also achieves optimistic responsiveness, but suffers from the issue that even a \emph{single} faulty leader is then able to achieve $\Theta(n\Delta)$ delays. Fever, meanwhile, is a protocol that is optimal for all measures considered in this paper, but which requires synchronized clocks (and so operates in a non-standard model). We show that, through the use of epochs (as in LP22) and by employing  Fever \emph{within} epochs, one can achieve a protocol for the partial synchrony model that simultaneously satisfies properties (1)--(3). 

The most technically complex task is then to modify the protocol so as to also ensure that the eventual worst case communication complexity is $O(f_an+n)$ (property (4)). To do so requires establishing conditions that allow processors to stop carrying out heavy epoch changes once sufficiently synchronized. This involves a number of substantial technical complexities that are discussed in depth in Section \ref{SS}.
 
\begin{table}[t]
\centering 
\setlength{\tabcolsep}{6pt}
\renewcommand{\arraystretch}{1.5}
\resizebox{.7\textwidth}{!}{%
\begin{tabular}{l|c|c|c|>{\columncolor[gray]{0.9}}c|}
\toprule
\textbf{Protocol}                                                                   & 
\makecell{\textbf{Cogsworth}\\ \textbf{NK20}} &
\makecell{ \textbf{LP22}} &
\textbf{Fever}                                              &
\makecell{\textbf{Lumiere}\\ \textbf{(this work)}}                                \\ 
\toprule \toprule
\textbf{Model} &  
\makecell{Partial \\ Synchrony}                                                 &
\makecell{Partial \\ Synchrony}                                                 &
\makecell{Bounded \\ Clocks}                                                   &
\makecell[c]{Partial \\ Synchrony}                                               \\ 
\midrule
\makecell[l]{\textbf{Worst-case}\\ \textbf{Communication}}         &
$O(n^3)$                                                           &
$O(n^2)$                                                           &
$O(n^2)$                                                           &
$O(n^2)$                                                           \\ 
\midrule
\makecell[l]{\textbf{Eventual Worst-case} \\ \textbf{Communication}}      & 
$O(n + n \factive^2)$                                                           &
$O(n^2)$                                                           &
$O(n \factive+n)$                                                             &
$O(n \factive+n)$                                                      \\
\midrule
\makecell[l]{\textbf{Worst-case}\\ \textbf{Latency}}              & 
$O(n^2 \Delta)$                                                          &
$O(n\Delta)$                                                            &
$O(f_a\Delta +\delta)$                                                            &
$O(n\Delta)$                                                            \\ 
\midrule
\makecell[l]{\textbf{Eventual Worst-case}\\ \textbf{Latency}}          &
$O(\factive^2\Delta +\delta)$                                                 &
$O(n\Delta)$                                                          &
$O(\factive \Delta +\delta)$                                                   &
$O(\factive\Delta +\delta)$                                                   \\ 
\bottomrule
\end{tabular}%
}
\vspace{0.2cm}
\caption{\textbf{Summary of the results for state-of-the-art optimistically responsive protocols.}  }
\label{tab:results-summary}
\end{table}

\section{The Setup}  \label{modelsec} 

For simplicity, we assume $n=3f+1$ and consider a set $\Pi = \{p_1, \ldots, p_n \}$ of $n$ processors. At most $f$ processors may become corrupted by the adversary during the course of the execution, and may then display  \emph{Byzantine} (arbitrary) behaviour. We let $f_a$ denote the actual number of processors that become corrupted. Processors that never become corrupted by the adversary are referred to as \emph{honest}.

\vspace{0.2cm} 
\noindent \textbf{Cryptographic assumptions}. Our cryptographic assumptions are standard for papers on this topic. Processors communicate by point-to-point authenticated channels. We use a cryptographic signature scheme, a public key infrastructure (PKI) to validate signatures, and a threshold signature scheme \cite{boneh2001short,shoup2000practical}.  The threshold signature scheme is used to create a compact signature of $m$-of-$n$ processors, as in other consensus and view synchronisation protocols \cite{yin2019hotstuff}. In this paper, either $m=f+1$ or $m=2f+1$.  The size of a threshold signature is $O(\kappa)$, where $\kappa$ is a security parameter, and does not depend on $m$ or $n$.
 We assume a computationally bounded adversary. Following a common standard in distributed computing and for simplicity of presentation (to avoid the analysis of negligible error probabilities), we assume these cryptographic schemes are perfect, i.e.\ we restrict attention to executions in which the adversary is unable to break these cryptographic schemes.  
 
  \vspace{0.2cm} 
\noindent \textbf{The partial synchrony model}. As noted above, processors communicate using point-to-point authenticated channels. We consider the standard partial synchrony model, whereby a message sent at time $t$ must arrive by time $\max\{\text{GST},t\} + \Delta$. While $\Delta$ is known, the value of GST is unknown to the protocol. The adversary chooses GST and also message delivery times, subject to the constraints already defined. We let $\delta$ denote the actual (unknown) upper bound on message delay after GST. 
Each processor $p$  also maintains a local clock value $\mathtt{lc}(p)$. We assume that each processor $p$ may join the protocol with  $\mathtt{lc}(p)=0$ at any arbitrary time prior to GST, and that processors may experience arbitrary clock drift prior to GST. For simplicity we assume that, for honest $p$ after GST,  $\mathtt{lc}(p)$ advances in real time, except when $p$ pauses $\mathtt{lc}(p)$ or bumps it forward (according to the protocol instructions).  However, our analysis is easily modified to deal with a scenario where local clocks have bounded drift during any interval after GST in which they are not paused or bumped forward. When we wish to make the dependence on $t$ explicit, we write  $\mathtt{lc}(p,t)$ to denote the value $\mathtt{lc}(p)$ at time $t$.

 \vspace{0.2cm}
\noindent   \textbf{The underlying protocol}. We suppose view synchronisation is required for some underlying protocol (such as Hotstuff) with the following properties: 
\begin{itemize} 
\item \textbf{Views}. Instructions are divided into views. Each view $v$ has a designated \emph{leader}, denoted $\mathtt{lead}(v)$.

\item \textbf{Quorum certificates}. The successful completion of a view $v$ is marked by all processors receiving a \emph{Quorum Certificate} (QC) for view $v$. The QC is a threshold signature of length $O(\kappa)$ (for the security parameter $\kappa$ that determines the length of signatures and hash values) combining $2f+1$ signatures from different processors testifying that they have completed the instructions for the view. 
In a chained implementation of Hotstuff, for example, the leader will propose a block, processors will send votes for the block to the leader, who will then combine those votes into a QC and send this to all processors. Alternatively, one could consider a (non-chained) implementation of Hotstuff, in which the relevant QC corresponds to a successful third round of voting. 
\item \textbf{Sufficient time for view completion}. We suppose:
\begin{enumerate} 
\item[$(\diamond_1)$]  There exists some known $x\geq 2$ such that if $\mathtt{lead}(v)$ is honest, if (the global time) $t\geq \text{\text{GST}}$, and if at least $2f+1$  honest  processors are in view $v$ from time $t$ until either they receive a QC for view $v$ or until $t+x\delta$, then  all honest processors will receive a QC for view $v$ by time  $t+x\delta$, so long as all messages sent by honest processors while in view $v$ are received within time $\delta \leq \Delta$. 
\item[$(\diamond_2)$] No view $v$ produces a QC  unless there is some non-zero interval of time during which at least $2f+1$ processors all act as if honest and in view $v$. 
\end{enumerate} 
\end{itemize} 

\vspace{0.1cm} 

\noindent \textbf{The view synchronisation task}. For $x$ as above, we must ensure: 
\begin{enumerate} 
\item If an honest processor is in view $v$ at time $t$ and in view $v'$ at $t'\geq t$, then $v'\geq v$. 
\item There exists some honest $\mathtt{lead}(v)$ and $t\geq$ GST such that each honest processor is in view $v$ from time $t$ until either it receives a QC for view $v$ or until $t+x\Delta$.
\end{enumerate}
Condition (1) above is required by standard view-based SMR protocols to ensure consistency. Since GST is unknown to the protocol, condition (2)  suffices to ensure the successful completion of infinitely many views with honest leaders. By a BVS protocol, we mean a protocol which determines when processors enters views and which satisfies conditions (1) and (2) above.

\vspace{0.2cm} 
\noindent \textbf{Complexity measures}. 
All messages sent by honest processors will be of length $O(\kappa)$, where $\kappa$ is the security parameter determining the length of signatures and hash values. We make the following definitions.
 Given $T\geq \text{GST}$, let $t_T^{\ast}$ be the least time $>T$ at which the underlying protocol has some honest $\mathtt{lead}(v)$ produce a QC for view $v$ (if there exists no such time, set $t_T^*:=\infty$). Then: 
 
 \begin{itemize} 
 \item The \emph{worst-case communication complexity after} $T$, denoted $W_T$, is the maximum number of messages sent by correct processors (combined) between time $T$ and $t_{T}^{\ast}$. 
 
 \item The \emph{worst-case communication complexity} is the worst-case communication complexity after GST$+\Delta$. 
 
 \item The \emph{eventual worst-case communication complexity} is  $ \limsup\limits_{T\rightarrow \infty}  W_T$.  
 
 \item The worst-case latency is the maximum possible value of $t_{\text{GST}}^{\ast}-\text{GST}$. 
 
 \item The eventual worst-case latency is   
  $ \limsup\limits_{T\rightarrow \infty}  \ t_T^*-T$. 
  \end{itemize} 
  We note that Lumiere achieves its eventual worst-case communication complexity and latency for $T$ which is within expected $O(n\Delta)$ time of GST.

\section{Overview of Lumiere} \label{overview} 

In this section, we give an informal overview of the Lumiere protocol. Since the protocol itself is quite simple and hence practical to implement, we start with a brief protocol synopsis in Section \ref{sec:synopsis}. 
 
The insights behind Lumiere and its analysis are more involved. Therefore, following the synopsis, we explain the ingredients that make it work. First, in Section \ref{LP22sec}, we  review the LP22 protocol, and explain why it suffers from the two weaknesses described in Section \ref{intro}. Then, in Section \ref{fev}, we review the basic idea behind the Fever protocol. In Section \ref{bls}, we describe how to combine the techniques developed by LP22 and Fever so as to give a protocol which has $O(n^2)$ worst-case communication complexity and which is smoothly optimistically responsive. Finally, in Section \ref{SS}, we describe how to remove the need for views with $\Omega(n^2)$ communication complexity within time $O(n\Delta)$ of GST. 

\subsection{Lumiere Synopsis} \label{sec:synopsis}

\noindent Borrowing from RareSync and LP22, Lumiere batches views into \emph{epochs}, and intertwines two synchronization procedures: a ``heavy'' epoch synchronization procedure and a ``light'' non-epoch synchronization procedure.

More specifically, at the start of some epochs, Lumiere employs a two round all-to-all broadcast procedure whose quadratic communication cost is amortized over all the views in the epoch. Importantly, Lumiere introduces a new mechanism that prevents performing such epoch synchronizations after a \emph{successful} epoch generating QCs by $2f+1$ leaders. This guarantees that only an expected constant-bounded number of heavy synchronizations will occur after GST. This mechanism is explained in detail in Section \ref{SS}.

Within each epoch, Lumiere employs a light view synchronization procedure, which entails linear message complexity per view, and which allows processors to `bump' their clocks forward and begin the instructions for the next view when they receive a QC -- this process of `bumping' clocks is explained in detail in Sections \ref{fev} and \ref{bls}. Bumping clocks in this way produces a protocol that is smoothly optimistically responsive. 

\vspace{0.1cm}
The above is the entire protocol. However, to make this work we need to tune a parameter $\Gamma$ of the protocol that determines the view timers, so as to guarantee two things.

Firstly, we need that: 

\begin{enumerate}
    \item[(a)] If the $f+1$ honest processors whose clocks are most advanced begin some view with an honest leader within time $\Gamma$ of one another after GST, the leader can generate a QC.  
\end{enumerate}
\noindent Second, because epoch synchronizations stop after a successful epoch, we must also guarantee that, in that event,  either the clocks of honest processors are already sufficiently synchronized,  or else the production of QCs by honest leaders will reduce the gap between the clocks of honest processes so that they are sufficiently synchronized soon after. To this end, we need that: 

\begin{enumerate} 
\item[(b)] The generation of QCs by honest leaders after GST ``shrinks'' the gap between the clocks of the $f+1$ honest processors whose clocks are most advanced (unless this gap is already less than $\Gamma$). 
\end{enumerate} 

It turns out that both (a) and (b) are accomplished simply by tuning the parameter $\Gamma$, and we will discuss appropriate values for $\Gamma$ in the sections that follow.

The protocol is described in detail in Sections \ref{bls}, \ref{SS} and \ref{formal}.

\subsection{Overview of LP22} \label{LP22sec} 
\textbf{Epochs}. The core idea behind LP22 revolves around the concept of epochs: For every $e$,  
the sequence $f+1$ views  $[ e(f+1), \dots, e(f+1)+f]$  is referred to as \emph{epoch} $e$. 
We define $V(e) := e\cdot (f+1)$ and $E(v) := \lfloor v/(f+1) \rfloor$, so that view $V(e)$ is the first view of epoch $e$ and $E(v)$ is the epoch to which view $v$ belongs. The first view of each epoch is also called an \emph{epoch view}, and all other views are called \emph{non-epoch} views. 
The leader of view $v$ is processor $v \text{ mod }n$. 

\vspace{0.2cm} 
\noindent \textbf{The clock time corresponding to a view}. Clock times are not really necessary for specifying the LP22 protocol (which could equally be presented using `timers'), but we wish to give a presentation here which is as similar as possible to our presentation of Lumiere later on. 

As explained in Section \ref{modelsec}, each processor maintains a local clock value $\mathtt{lc}(p)$. 
We also consider a clock time corresponding to each view, denoted $\mathtt{c}_v:=\Gamma v$.  Here $\Gamma$ should be thought of as the length of time allocated to view $v$, and (for LP22) can be set to $\Gamma:=x\Delta +\Delta$ (where $x$ is as specified in Section \ref{modelsec}). Roughly, the idea is that the processor $p$ wishes to execute the instructions for view $v$ once its local clock reaches $\mathtt{c}_v$.

\vspace{0.2cm} 
\noindent \textbf{The instructions for entering epoch views}. Let $v:=V(e)$. Processor $p$ wishes to enter epoch $e$ and view $v$ if it is presently in a lower view and once its local clock reaches $\mathtt{c}_{v}$. At this point, it pauses its local clock and sends an $\mathtt{epoch\ view }\ v$ message to all processors,\footnote{It is convenient throughout to assume that when a processor sends a message to all processors, this includes itself. }  indicating that it wishes to enter epoch $e$.  Upon receiving  $\mathtt{epoch\ view }\ v$ messages from $2f+1$ distinct processors while in a view $<v$, any honest processor combines these into a single threshold signature, which is called an Epoch Certificate (EC) for view $v$, and sends the EC to all processors.  Upon seeing an EC for view $v$ while in any lower view, any honest processor sets $\mathtt{lc}(p):=\mathtt{c}_v$, unpauses its local clock if paused, and  then enters epoch $e$ and view $v$. Note that this process involves honest processors sending $\Theta(n^2)$ messages (combined). If $t\geq $ GST is the first time at which an honest processor enters epoch $e$, then all honest processors see the corresponding EC giving them permission to enter epoch $e$ by time $t+\Delta$.  

\vspace{0.2cm} 
\noindent \textbf{The instructions for entering non-epoch views}. If we did not require a protocol that is optimistically responsive, we could simply have each processor enter non-epoch view $v$ when its local clock reaches $\mathtt{c}_v$. To achieve optimistic responsiveness, LP22 uses a simple trick. Processor $p$ enters non-epoch view $v$ when the first of the the following events occurs: 
\begin{itemize} 
\item Its local clock reaches $\mathtt{c}_v$, or; 
\item Processor $p$ sees a QC for view $v-1$. 
\end{itemize} 

\vspace{0.2cm} 
\noindent \textbf{High level analysis of the protocol}. The key insight is that, while the process for entering epoch $e$ entails honest processors sending $\Theta(n^2)$ messages (combined), no further message sending is then required to achieve synchronization within the epoch. If the first honest processor to enter epoch $e$ does so after GST, then all honest processors will see the corresponding EC within time $\Delta$ of each other. Suppose $v$ is the first view of epoch $e$ with an honest leader. If no leader has already produced a QC for some view $<v$ in the epoch, then view $v$ will produce a QC.

To see that the protocol is optimistically responsive, suppose all leaders of an epoch are honest and let $\delta$ be the actual upper bound on message delay after GST. If  the first honest processor to enter epoch $e$ does so after GST, then (according to the assumptions of Section \ref{modelsec}) the first leader will produce a QC in time $O(\delta)$. All honest processors will receive this QC within time $\delta$ of each other, which means that the leader of the second view will then produce a QC in time $O(\delta)$, and so on. 

It is also not difficult to see that the protocol suffers from the two issues (i) and (ii) outlined in Section \ref{intro}. First of all, we have already noted that entering each epoch requires honest processors to send $\Theta(n^2)$ messages (combined), meaning that the eventual worst-case communication complexity is $\Theta(n^2)$. 
To see that even a single Byzantine processor can cause a latency of $O(n\Delta)$ between consensus decisions, consider what happens when the first $f$ leaders of an epoch produce QCs very quickly. If the last leader of the epoch is Byzantine, then honest processors must then wait for time almost $(f+1)\Gamma$ before wishing to enter the next epoch. Figure~\ref{fig:responsive} illustrates a scenario with three good views producing $QC$s quickly, a faulty fourth view, and the fifth view suffering almost a $3\Gamma$ delay.

\begin{center} 
\begin{figure}[H] 
\begin{center} 
\begin{tikzpicture}[xscale=1,yscale=1]


\draw[-{>[scale=1.5, length=6,
          width=3]},line width=0.4pt] (0,0) -> (10,0); 
          
 \draw[-{>[scale=1.5, length=3,
          width=3]},line width=0.4pt] (0,2.65) -> (1,1.1);          
          
\draw[-{[scale=1.5, length=0.1,
          width=3]},line width=1pt] (1,0) -> (1,0.3);      
          
 
 \draw[-{[scale=1.5, length=0.1,
          width=3]},line width=0.7pt] (3,0) -> (3,0.3);

  \draw[-{[scale=1.5, length=0.1,
          width=3]},line width=0.7pt] (5,0) -> (5,0.3);  
 
  \draw[-{[scale=1.5, length=0.1,
          width=3]},line width=0.7pt] (7,0) -> (7,0.3); 
          
   \draw[-{[scale=1.5, length=0.1,
          width=3]},line width=0.7pt] (9,0) -> (9,0.3);

  
   \draw[-{[scale=1.5, length=0.1,
          width=3]},line width=1.3pt, dashed, orange] (1.7,0) -> (1.7,0.3);

   \draw[-{[scale=1.5, length=0.1,
          width=3]},line width=1.3pt, dashed, orange] (2.5,0) -> (2.5,0.3);

   \draw[-{[scale=1.5, length=0.1,
          width=3]},line width=1.3pt, dashed, orange] (3.2,0) -> (3.2,0.3);

          \node [above] at (0,3) {All-to-all synchronization};
          \node [above] at (0,2.7) {at the beginning of epoch $e$};
          
                 \node [below] at (5,0) {$\mathtt{lc}(p)$};
                 
    \node [above] at (1,0.4) {$V(e)\cdot \Gamma$};
    
       \node [above] at (3,0.4) {$+ \Gamma$};   
       
       \node [above] at (5,0.4) {$+ \Gamma$};   
       
         \node [above] at (7,0.4) {$+ \Gamma$};     
         
            \node [above] at (9,0.4) {$+ \Gamma$};   
            
      \node [above, orange] at (3.5,2) {QCs for views};    
      
        \node [above, orange] at (3.5,1.7) {$V(e),V(e)+1, V(e)+2$};              
                                 
 \draw[-{>[scale=1.5, length=3,
          width=3]},line width=0.4pt, orange] (2.5,1.7) -> (1.7,0.5);     
          
  \draw[-{>[scale=1.5, length=3,
          width=3]},line width=0.4pt, orange] (3.3,1.7) -> (2.5,0.5);           
          
  \draw[-{>[scale=1.5, length=3,
          width=3]},line width=0.4pt, orange] (4,1.7) -> (3.2,0.5);   
          
    \node [above, orange] at (8,3) {Enter view $V(e)+4$};                                                    
                                 
      \node [above, orange] at (8,2.7) {after no progress};      
                                
           \node [above, orange] at (8,2.4) {in view $V(e)+3$};   
           
   \draw[-{>[scale=1.5, length=3,
          width=3]},line width=0.4pt, orange] (8,2.3) -> (9,0.9);

\end{tikzpicture}
\end{center}
\caption{LP22: Epoch-synchronization and optimistically responsive QC generation}
\label{fig:responsive} 
\end{figure}
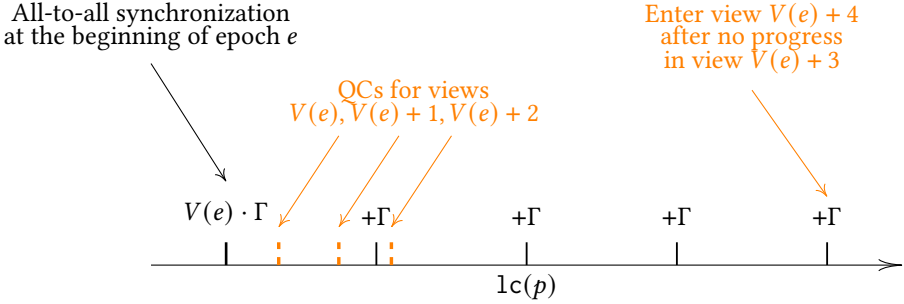
\end{center}

\subsection{Overview of Fever} \label{fev} 
As noted previously, Fever makes stronger assumptions regarding \emph{clock synchronization} than are standard (the assumptions of this paper are standard and are outlined in Section \ref{modelsec}). However, we will show that the techniques developed by Fever can be combined with the LP22 protocol to give a protocol that is smoothly optimistically responsive. 
The fundamental idea behind Fever stems from consideration of what we refer to as the \emph{honest gap}:

\begin{definition} [\textbf{Defining the honest  gaps}] \label{hgdef} At any time $t$, we let $p_{i,t}$ be the honest processor whose local clock is the $i^{\text{th}}$ most advanced, breaking ties arbitrarily. So, $p_{1,t}$ is the honest processor whose local clock is most advanced. For $i\in [1,2f+1]$, we define the $i^{\text{th}}$ \emph{honest  gap} at time $t$ to be $\mathtt{hg}_{i,t}:=  \mathtt{lc}(p_{1,t},t)-\mathtt{lc}(p_{i,t},t)$. In particular, $\mathtt{hg}_{f+1,t}$ is the gap between the local clock of the most advanced honest processor and the local clock of the  $(f+1)^{\text{st}}$ most advanced honest processor. 
\end{definition} 

Recall that, in Section \ref{LP22sec}, $\Gamma$ was the maximum length of time allotted to each view, and that we set $\Gamma:=(x+1)\Delta$ (where $x$ is as specified in Section \ref{modelsec}). For Fever, we set $\Gamma:=2(x+1)\Delta$. We note that this change in $\Gamma$ will not impact protocol performance in the optimistic case that leaders are honest. We also observe below that this factor of 2 can also be decreased to arbitrarily close to 1 by making a simple change to the protocol. 

\vspace{0.2cm} 
\noindent \textbf{The non-standard clock assumption}. The assumption that Fever rests on is that, at the start of the protocol execution, $\mathtt{hg}_{f+1,0}\leq \Gamma$. The protocol also assumes that an honest processor's local clock progresses in real time unless `bumped forward' (according to the protocol instructions). Given this assumption, the protocol  is then designed so that, even though processors often bump their clocks forward: 
\begin{enumerate} 
\item[(a)]  $\mathtt{hg}_{f+1,t}\leq \Gamma$ for all $t\geq 0$. 
\item[(b)]  If $\mathtt{hg}_{f+1,t}\leq \Gamma$ at $t\geq$ GST which is the first time an honest processor enters the view $v$ with honest leader, then the leader will produce a QC. 
\end{enumerate} 
Since the instructions are very simple, we just state them, and then show that they function as intended. 

\vspace{0.2cm}
\noindent  \textbf{Initial and non-initial views}. Fever does not consider any notion of epochs. The leader for view $v$ is processor $\lfloor v/2 \rfloor \text{ mod } n$. If $v$ is even, then $v$ is called `initial', otherwise $v$ is `non-initial'. 
The reason we consider initial and non-initial views will become clear when we come to verify (b) above. 
As in Section \ref{LP22sec}, the clock-time corresponding to view $v$ is $\mathtt{c}_v:=\Gamma v$. 

\vspace{0.1cm} 
\noindent \textbf{When processors enter views}. If $v$ is initial, then $p$ enters view $v$ when  $\mathtt{lc}(p)=\mathtt{c}_v$.  If $v$ is not initial, then $p$ enters view $v$ if it is presently in a view $<v$ and it receives a QC (formed by the underlying protocol) for view $v-1$.

\vspace{0.1cm} 
\noindent  \textbf{View Certificates}. When an honest  processor $p$ enters a view $v$ which is initial, it  sends a $\mathtt{view}\ v$ message to $\mathtt{lead}(v)$. This message is just the value $v$ signed by $p$. Once $\mathtt{lead}(v)$ receives $f+1$ $\mathtt{view}\ v$ messages from distinct processors, it  combines these into a single threshold signature, which is a View Certificate (VC) for view $v$, and sends this VC to all processors.

\vspace{0.1cm} 
\noindent \textbf{When processors bump clocks}.   At any point in the execution, if an honest processor $p$ receives a QC for view $v-1$ (formed by the underlying protocol) or a VC for view $v$, and if $\mathtt{lc}(p)<\mathtt{c}_v$, then $p$ instantaneously bumps their local clock to $\mathtt{c}_v$.

  \vspace{0.3cm} 
 \noindent \textbf{Verifying the claim (a) above}. Since the local clocks of honest processors only ever move forward, it follows that at any point in an execution, if an honest processor $p$ has already contributed to a QC or a VC for view $v$, then $\mathtt{lc}(p)\geq \mathtt{c}_v$. 
To prove the claim, suppose towards a contradiction that  there is a first point of the execution, $t$ say,  at which the claim fails to hold. 
Then it must be the case that some honest $p$ bumps its clock forward at $t$,  and that $p=p_{1,t}$ after bumping its clock forward, with $\mathtt{lc}(p)$ strictly greater than the value of any honest clock at any time $<t$. 

There are two possibilities: 

\begin{enumerate} 
\item $p$ bumps its clock because it receives a VC for some view $v$ with $\mathtt{c}_v>c(p)$.  In this case, there must exist at least one honest processor $p'\neq p$ which contributed to the VC for view $v$. This contradicts the fact that $p=p_{1,t}$. 
\item $p$ bumps its clock because it sees a QC. In this case, at least $f+1$ honest processors must have contributed to the QC, which directly gives the required contradiction.  
\end{enumerate} 

 \vspace{0.2cm} 
 \noindent \textbf{Verifying the claim (b) above}. Suppose that $\mathtt{hg}_{f+1,t}\leq \Gamma$ at $t$ which is the first time an honest processor enters a view $v$ with honest leader. 
 Then $\mathtt{lead}(v)$ will receive $f+1$ $\mathtt{view}\ v$ messages by time $t+\Gamma+\Delta$, and all honest processors will be in view $v$ by time $t+\Gamma +2\Delta$ (if the leader has not already produced a QC for view $v$ by this time): Note that processors do not enter view $v+1$ until seeing a QC for view $v$ because $v+1$ is not initial. All processors will then receive a QC for view $v$ by time $t+2\Gamma-x\Delta$  and will be in view $v+1$ by this time (unless the leader has already produced a QC for view $v+1$). All processors will then receive a QC for view $v+1$ by time $t+2\Gamma$. 

\vspace{0.2cm} 
Although we do not give a formal proof here (we give a formal proof for Lumiere later), it is also clear that the protocol is smoothly optimistically responsive because the delay caused by each faulty leader is at most $\Gamma$ per view. 

\vspace{0.2cm} 
\noindent \textbf{Reducing} $\Gamma$. It is not difficult to see that $\Gamma$ can be made arbitrarily close to $(x+1)\Delta$ by increasing the number of consecutive views allocated to each leader (and altering the definition of initial and non-initial views accordingly): doing so increases worst-case latency, but proportionally decreases the total time that can be wasted by faulty leaders.

\subsection{Basic Lumiere Solution} \label{bls} 

To describe the full version of Lumiere, we break the presentation down into two steps. In this section, we show how to combine the techniques developed by LP22 and Fever to give a protocol called Basic Lumiere, which maintains $O(n^2)$ worst-case communication complexity while also being smoothly optimistically responsive. Then, in Section \ref{SS}, we describe how to modify the protocol to remove the need for views with $\Omega(n^2)$ communication complexity within time $O(n\Delta)$ of GST (this last step turns out to be the one that is complicated). 

\vspace{0.2cm} 
\noindent \textbf{The basic idea}. The idea behind combining LP22 and Fever is simple. Fever requires the assumption that $\mathtt{hg}_{f+1,t}=0$ at the start of the protocol execution. While we do not wish to make this assumption, the `heavy' synchronization process that LP22 employs at the start of each epoch \emph{does} ensure that $\mathtt{hg}_{f+1,t}$ is bounded by $\Gamma$ at the start $t$ of any epoch that begins after GST. We can therefore employ the Fever protocol \emph{within epochs}. Doing so ensures that the $\mathtt{hg}_{f+1,t'}$ remains bounded by $\Gamma$ for $t' \geq t$ within each epoch. All honest leaders therefore produce QCs, and the protocol is smoothly optimistically responsive because each faulty leader can only cause $\Gamma$ delay per view.

\vspace{0.2cm}
\noindent  \textbf{Initial, non-initial, and epoch views}.  The leader for view $v$ is processor $\lfloor v/2 \rfloor \text{ mod } n$. If $v$ is even, then $v$ is called initial: otherwise $v$ is non-initial. If $v \text{ mod } 2(f+1)=0$, then $v$ is called an epoch view. We set $V(e):=2(f+1)e$. The clock-time corresponding to view $v$ is $\mathtt{c}_v:=\Gamma v$. 

\vspace{0.1cm} 
\noindent \textbf{When processors enter initial non-epoch views}. If $v$ is initial and is not an epoch view,  then $p$ enters view $v$ when  $\mathtt{lc}(p)=\mathtt{c}_v$. 

\vspace{0.1cm} 
\noindent \textbf{When processors enter non-initial views}.
If $v$ is not initial, then $p$ enters view $v$ if it is presently in a view $<v$ and it receives a QC  for view $v-1$.

\vspace{0.1cm} 
\noindent \textbf{When processors enter epoch views}. Processor $p$ enters the epoch view $v$ if it is presently in a lower view and if $p$ receives an EC for view $v$. 

\vspace{0.1cm} 
\noindent  \textbf{View Certificates}. When an honest processor $p$ enters a view $v$ which is initial and which is not an epoch view, it  sends a $\mathtt{view}\ v$ message to $\mathtt{lead}(v)$. This message is just the value $v$ signed by $p$. Once $\mathtt{lead}(v)$ receives $f+1$ $\mathtt{view}\ v$ messages from distinct processors, it  combines these into a single threshold signature, which is a view certificate (VC) for view $v$, and sends this VC to all processors.

\vspace{0.1cm} 
\noindent \textbf{Epoch certificates}.  Let $v:=V(e)$. Processor $p$ wishes to enter epoch $e$ and view $v$ if it is presently in a lower view and once its local clock reaches $\mathtt{c}_{v}$. At this point, it pauses its local clock and sends an $\mathtt{epoch\ view }\ v$ message to all processors.  Upon receiving  $\mathtt{epoch\ view }\ v$ messages from $2f+1$ distinct processors while in a view $<v$, any honest processor combines these into a single threshold signature, which is called an EC for view $v$, and sends the EC to all processors. 

\vspace{0.1cm} 
\noindent \textbf{When processors bump clocks}.   At any point in the execution, if a correct processor $p$ receives a QC for view $v-1$ (formed by the underlying protocol) or a VC or EC  for view $v$, and if $\mathtt{lc}(p)<\mathtt{c}_v$, then $p$ instantaneously bumps their local clock to $\mathtt{c}_v$. 
Upon receiving an EC for view $v$ while in a lower view, $p$ unpauses its local clock (if paused).

\subsection{Removing Epoch Synchronisations for the Steady State} \label{SS} 

The idea behind removing the need for repeated `heavy' ($\Theta(n^2)$ communication) epoch view changes after GST is that, once the first has been carried out, honest processors are already sufficiently synchronised. Since GST is unknown, however, processors cannot know directly when synchronization has occurred. Instead, they should look to see when some `success criterion' has been satisfied. For example, one might wait to see an epoch which has produced a QC or a certain number of QCs, and then temporarily pause heavy view changes until one sees an epoch which does not satisfy the success criterion. Unfortunately, taking this approach immediately introduces some complexities: 

\vspace{0.2cm} 
\noindent \textbf{Some processors may see the success criterion satisfied, while others do not}. 
In the case that the success criterion is satisfied due to QCs produced by faulty leaders, it may be the case that some honest processors fail to see the success criterion satisfied. Since they will then require an EC to enter the next epoch, those processors who do see the success criterion satisfied will still need to contribute to the EC. On the other hand, we do not wish Byzantine processors alone to be able to trigger EC formation, otherwise the Byzantine players will be able to cause every epoch to  begin with a heavy view change. 
To deal with this, we have to modify the epoch change process slightly:
\begin{itemize} 
\item  If an honest processor sees the success condition satisfied for an epoch $e$, then they view $V(e+1)$ as a standard initial view (meaning that they enter the view when their local clock reaches $\mathtt{c}_{V(e+1)}$)  and do not immediately send an $\mathtt{epoch\  view}\  V(e+1)$ message (nor pause their local clock).
\item Any honest processor who reaches the end of epoch $e$ and does not see the success criterion satisfied pauses its local clock and sends an  $\mathtt{epoch\  view}\  V(e+1)$ message to all processors 
Any set of $f+1$ $\mathtt{epoch\  view}\  V(e+1)$ messages from distinct processors is referred to as a TC for view $v$. \item When any honest processor in an epoch $\leq e+1$ sees a TC for view $V(e+1)$, they send an $\mathtt{epoch\  view}\  V(e+1)$ message to all processors. 
\item Any processor that does not see the success criterion for epoch $e$ satisfied enters epoch $e+1$ upon seeing an EC for view $V(e+1)$, and unpauses its local clock at that point.
\item An EC for view $V(e+1)$ is now defined to be a set of $\mathtt{epoch\  view}\  V(e+1)$ messages from $2f+1$ distinct processors. 
\end{itemize} 

\vspace{0.2cm} 
\noindent \textbf{The success criterion might be satisfied for all epochs after GST without synchronization actually occurring}. Since QCs may be formed with the help of Byzantine processors, the formation of QCs does not actually imply that the $(f+1)^{\text{st}}$ honest gap is less than $\Gamma$ (or even small). One is therefore potentially presented with a scenario where the success criterion continues to be satisfied for every epoch after GST with the help of Byzantine processors. If epochs involve $2(f+1)$ views, and if the success condition is seeing a single leader in the epoch produce two  QCs, then this would represent highly sub-optimal behaviour. The situation can be improved  by increasing the length of an epoch by a constant factor to $2n$ views (meaning each processor gets two successive views as leader) and by setting the success criterion to be $2f+1$ leaders each producing two QCs (for views in the epoch). If every epoch after GST produces the success criterion, this might now look like a reasonable outcome. Unfortunately, it still represents  sub-optimal behaviour because $f$ honest leaders may fail to produce QCs in each epoch if $f$ Byzantine leaders each produce two QCs (in this case, the adversary is highly over-represented in QC generation).   

To remedy this issue, the basic idea is to set $\Gamma$ so that each honest leader who produces QCs is able to shrink the $(f+1)^{\text{st}}$ honest gap. If the success criterion continues to be satisfied, meaning that multiple honest leaders are shrinking the $(f+1)^{\text{st}}$ honest gap, then the aim is that, within expected time $O(n\Delta)$ of GST, the $(f+1)^{\text{st}}$ honest gap should come down below $\Gamma$.  

\vspace{0.2cm} 
\noindent \textbf{Setting $\Gamma$ to shrink the honest gap}. For technical reasons we set $\Gamma:=2(x+2)\Delta$, but the following argument would also work for the value of $\Gamma$ used in Section \ref{fev}. We insist that honest leaders only produce a QC for view $v$ if they can do it within time $\Gamma/2 - 2\Delta$ of sending the VC for view $v$, or within that time of sending the QC for the previous view if $v$ is not initial.  Note that this bound is on the time at which the QC is produced, rather than the time at which it is received.

 To see that an honest leader who produces a QC after GST shrinks the $(f+1)^{\text{st}}$ honest gap, we can reason as follows: a more formal proof will be given in Section \ref{verification}. Let $t$ be such that $\mathtt{lc}(p_{f+1,t},t)=c_v$ and suppose $p:=\mathtt{lead}(v)$ is honest. The instructions ensure that $p$ receives $f+1$ $\mathtt{view}\ v$ messages by $t+\Delta$ and sends a VC to all processors by this time. The QC is then produced by time $t+\Gamma/2-\Delta$ and is received by all processors by time $t+\Gamma/2$. Upon receiving this QC, honest processors whose local clocks are less than $\mathtt{c}_{v+1}$ forward their clocks to this value. This reduces the $(f+1)^{\text{st}}$ honest gap (and also the $(2f+1)^{\text{st}}$) by at least $\Gamma/2$ or to a value below $\Gamma$ unless,  for some $t'\in [t,t+\Gamma/2]$, some honest processor bumps its honest clock forward upon seeing a QC at $t'$ and becomes $p_{1,t'}$ upon doing so. In the latter case the $(f+1)^{\text{st}}$ honest gap is anyway reduced to below $\Gamma$. 
 
 \vspace{0.2cm} 
\noindent \textbf{The honest gap must be brought below $\Gamma$ within a single epoch}. While the $(f+1)^{\text{st}}$ honest gap cannot be increased (except to a value below $\Gamma$) within an epoch, the same is not true of the $(2f+1)^{\text{st}}$ honest gap.  Unfortunately, the fact that some honest processors may see epoch $e$ produce the success criterion, while others do not, means that the process of moving to epoch $e+1$ may significantly increase the $(f+1)^{\text{st}}$ honest gap: If only $f$ honest processors see the success criterion, then some honest processors will have to wait to see an EC for view $V(e+1)$ before entering the next epoch. This means that the $(f+1)^{\text{st}}$ honest gap can increase to become equal to the $(2f+1)^{\text{st}}$ honest gap. 

This difficulty can be addressed by increasing the length of an epoch by a constant factor, so that a single epoch that produces the success criterion brings the $(f+1)^{\text{st}}$ honest gap to within $\Gamma$ by the end of the epoch. We note that increasing the length of the epoch does represent a slight tradeoff. While Basic Lumiere outperforms LP22 on all measures, the cost for Lumiere of significantly decreasing the eventual worst-case communication complexity is a constant factor increase in the worst-case latency, i.e.\ the time to the \emph{first} consensus decision after GST. Latency in the steady state is still significantly decreased when compared to LP22, because Lumiere is smoothly optimistically responsive.  

 \vspace{0.2cm} 
\noindent \textbf{Two further complexities}. Extending the length of an epoch allows us to ensure that the $(f+1)^{\text{st}}$ honest gap is brought down to below $\Gamma$ by the last view of the epoch. 
If the $(2f+1)^{\text{st}}$ honest gap is still large at that point, however, there remains the danger that the process of changing epoch will substantially increase the $(f+1)^{\text{st}}$ honest gap: We wish to ensure that the $(f+1)^{\text{st}}$ honest gap remains small thereafter, so that all honest leaders from QCs from that point on. To ensure that the epoch change does not increase the  $(f+1)^{\text{st}}$ honest gap (or at least not by more than $\Delta$), it suffices that the last leader of the epoch and the first leader of the next are both honest, since then they will be able to reduce the  $(2f+1)^{\text{st}}$ honest gap to below $\Gamma$. To make things simple, we can also ensure that these two views have the same leader. 

To describe a final complexity that needs to be dealt with, let us suppose that the $(f+1)^{\text{st}}$ honest gap is less than $\Gamma$ at the beginning of an epoch $e$, so that all honest leaders produce QCs during the epoch. If time less than $\Delta$ passes after the success criterion is satisfied and before the clocks of some honest processors reach $\mathtt{c}_{V(e+1)}$ (because a long sequence of QCs are produced in time less than $\Delta$),  then it is possible that honest processors will not have seen the QCs produced by some honest leaders when their local clock reaches $\mathtt{c}_{V(e+1)}$, and will therefore send $\mathtt{epoch\ view}\ V(e+1)$ messages. To remedy this obstacle, we have processors wait time $\Delta$ before sending an $\mathtt{epoch\ view}\ V(e+1)$ message when they do not immediately see satisfaction of the success criterion.

\section{The formal specification} \label{formal}

It is convenient to assume that, whenever any processor sends a message to all processors, it also sends this message to itself (and this message is immediately received). 

\vspace{0.2cm} 
\noindent \textbf{Local clocks.}
Recall that each processor $p$ has a \emph{local clock} value, denoted $\mathtt{lc}(p)$. Initially, $p$ sets $\mathtt{lc}(p) := 0$. For simplicity, we suppose $\mathtt{lc}(p)$ advances in real time (with zero drift after GST), except when $p$ pauses $\mathtt{lc}(p)$ or bumps it forward. As noted in Section \ref{modelsec}, our analysis is easily modified to deal with a scenario where local clocks have bounded drift during any interval after GST in which they are not paused or bumped forward. 
Recall also Definition \ref{hgdef}, which applies unaltered to this section.

\vspace{0.2cm} 
\noindent \textbf{Leaders and the clock time corresponding to each view}. To determine the leader of view $v$, let $(g_0,\dots, g_{z-1})$ be a random ordering of the permutations of $\{ 1,\dots,n \}$ subject to the condition that, if $i\leq z$ is odd, then $g_{i}$ and $g_{i+1 \text{ mod }z}$ are reverse orderings.\footnote{The latter condition on reverse orderings is stipulated so that, for $e\geq 0$, the last leader of epoch $e$ is the same as the first leader of epoch $e+1$. } We wish to give each leader two consecutive views, and to order leaders according to $g_0$, then $g_1$, and so on, cycling back to $g_0$ once we have ordered according to $g_{z-1}$. To achieve this, set $j:= \lfloor v/(2n) \rfloor  \text{ mod } z$.  The leader for view $v$, denoted $\mathtt{lead}(v)$,  is processor $g_{j}(\lfloor v/2 \rfloor \text{ mod } n)$.

\vspace{0.2cm} 
\noindent \textbf{Initial and non-initial views}.  If $v$ is even, then view $v$ is called \emph{initial}. Otherwise, $v$ is \emph{non-initial}. The clock time corresponding to view $v$ is $\mathtt{c}_v:=\Gamma v$. We set $\Gamma:=2(x+2)\Delta$, and insist that honest leaders only produce a QC for view $v$ if they can do it within time $\Gamma/2 - 2\Delta$ of sending the VC for view $v$, or within that time of sending the QC for the previous view if $v$ is not initial.
The  operational distinction between initial and non-initial views is expanded on below.  

\vspace{0.2cm} 
\noindent \textbf{Epochs and epoch views}.  Epoch $e$ consists of the $10n$ views $[5(2n)e,5(2n)(e+1))$. 
The first view of each epoch is called an \emph{epoch view}, while other views are referred to as \emph{non-epoch} views. We define $V(e):=10ne$ and $E(v):= \lfloor v/(10n) \rfloor$, so that $V(e)$ is the first view of epoch $e$ and $E(v)$ is the epoch to which view $v$ belongs. 


\vspace{0.2cm} 
\noindent \textbf{When processors enter initial views}. A processor $p$ enters the initial view $v$ when $\mathtt{lc}(p)==\mathtt{c}_v$ if its local clock is not paused. Upon doing so, $p$ sends a $\mathtt{view}\ v$ message to $\mathtt{lead}(v)$. 

\vspace{0.2cm} 
\noindent \textbf{Forming VCs}. Suppose $v$ is an initial (epoch or non-epoch) view. 
If $\mathtt{lead}(v)$ is presently in view $v'\leq v$ and receives  $\mathtt{view}\ v$ messages from $f+1$ distinct processors, then $\mathtt{lead}(v)$  forms a threshold signature which is a VC for view $v$ and sends this to all processors.

\vspace{0.2cm} 
\noindent \textbf{The instructions upon receiving a VC}. 
If $p$ is presently in view $v'< v$ and receives a VC for initial view $v$, then $p$ sets $\mathtt{lc}(p):=\mathtt{c}_v$.

\vspace{0.2cm} 
\noindent \textbf{When processors enter non-initial views}. A processor $p$ enters the non-initial view $v$ if it is presently in a lower view and if it sees a QC for view $v-1$. 

\vspace{0.2cm} 
\noindent \textbf{The instructions upon receiving a QC}. If $p$ is presently in view $v$ and sees a QC for view $v'\geq v$, then it sets $\mathtt{lc}(p):=\mathtt{c}_{v'+1}$ if $\mathtt{lc}(p)< \mathtt{c}_{v'+1}$. 

\vspace{0.2cm} 
The  operational distinction between initial and non-initial views is this.  Upon their local clocks reaching $\mathtt{c}_v$, where $v$ is an initial non-epoch view, processors perform light view synchronization to bring others into view $v$. Namely, they enter view $v$ and immediately send a view message. On the other hand, processors do not enter the non-initial view $v+1$ unless they obtain a QC for view $v$; view $v+1$ serves as sort of a ``grace period'' allowing the initial view $v$ preceding it to produce a QC even after some local clocks have reached $\mathtt{c}_{v+1}$. Note that, upon obtaining a QC for view $v$, processors bump their local clocks (if lower) to $\mathtt{c}_{v+1}$ and proceed to enter view $v+1$ immediately if in a lower view. If a processor receives a QC for the non-initial view $v+1$ while in a view $\leq v+1$, then it immediately enters view $v+2$ (unless, perhaps, $v+2$ is an epoch view -- see below). 

 \vspace{0.2cm} 
\noindent \textbf{The success criterion}. Each processor $p$ maintains a local variable $\mathtt{success}(e)$, initially 0. Processor $p$ sets $\mathtt{success}(e):=1$ upon seeing at least $2f+1$  distinct processors each produce 10 QCs for views in the epoch. We say that epoch $e$ `produces the success criterion' if at least $2f+1$  distinct processors each produce 10 QCs for views in the epoch.  

 \vspace{0.2cm} 
\noindent \textbf{ECs and TCs}. Suppose $v=V(e)$.  An EC for view $v$ is a set of $2f+1$ $\mathtt{epoch\  view}\ v$ messages, signed by distinct processors. A TC for view $v$ is a set of $f+1$ $\mathtt{epoch\  view}\ v$ messages, signed by distinct processors.

 \vspace{0.2cm} 
\noindent \textbf{The instructions for entering epoch views}. Suppose $v=V(e)$. When $\mathtt{lc}(p)=\mathtt{c}_v$, there are two cases: 
\begin{itemize}
\item  Upon first seeing $\mathtt{lc}(p)=\mathtt{c}_v$ and $\mathtt{success}(E(v)-1) = 0$: 
\begin{itemize} 
\item $p$ pauses its local-clock until seeing an EC, QC or VC for a view $\geq v$, or a TC for a view $>v$, or until $\mathtt{success}(E(v)-1) =1$ ;
\item If its local clock is still paused time $\Delta$ after pausing, then $p$ sends  an $\mathtt{epoch\ view}\  v$ message to all processors. 
\end{itemize}

\item Upon first seeing $\mathtt{lc}(p)=\mathtt{c}_v$ and $\mathtt{success}(E(v)-1) = 1$ (and also under other conditions already stated in the bullet point above), $p$ enters epoch $e$ and view $V(e)$. 
\end{itemize}

 \vspace{0.2cm} 
\noindent \textbf{Forming ECs}. Suppose $v=V(e)$. As stipulated above, $p$ sends an  $\mathtt{epoch\ view}\ v$ message to all processors $\Delta$ time after its local clock reaches  $\mathtt{lc}(p)=\mathtt{c}_v$ if $\mathtt{success}(e)=0$ at this time. We further stipulate that if $p$ is in epoch $e'\leq e$ and receives a TC for view $v$, then $p$ sets $\mathtt{lc}(p)=\mathtt{c}_v$ if $\mathtt{lc}(p)<\mathtt{c}_v$ and sends an $\mathtt{epoch\ view}\ v$ message to all processors.

\vspace{0.4cm} 
The pseudocode is shown in Algorithm \ref{alg:lum}.

\vspace{0.4cm}
The proof of correctness (establishing Theorem \ref{maintheorem}) appears in Section \ref{verification}. 

\begin{algorithm*} 
\begin{algorithmic}[1]
\scriptsize
    \State \textbf{Local variables}  \label{lum:setup}
    \State $\mathtt{lc}(p)$, initially 0 \Comment{This is the value of $p$'s local-clock }
    \State $view(p)$, initially -1   \Comment{The present view of $p$.} 
    \State $epoch(p)$, initially -1  \Comment{The present epoch of $p$.} 
    \State $\mathtt{success}(e)$, $e\in \mathbb{Z}_{\geq -1}$, initially 0 \Comment{Updated as described above}\label{lum:setup2}
    \State

    \State \Comment{--------- Epoch Synchronization Instructions (start) ----------------------------------}\label{lum:EC}
    \State 
   \State  \textbf{Upon} first seeing $\mathtt{lc}(p)==\mathtt{c}_v$ for epoch view $v>view(p)$  and $\mathtt{success}(E(v)-1) == 0$: 
    \label{lum:pause}
    	    \State \hspace{0.5cm} Pause local-clock $\mathtt{lc}(p)$ until seeing an EC, QC or VC for a view $\geq v$ or a TC for a view $>v$, or until $\mathtt{success}(E(v)-1) ==1$ ; \label{pauseline} 
	    \State \hspace{0.5cm} If local clock is still paused time $\Delta$ after pausing, send  an $\mathtt{epoch\ view}\  v$ message to all processors;

	    \State 
 
    \State  \textbf{Upon} first seeing $\mathtt{lc}(p)==\mathtt{c}_v$ for epoch view $v>view(p)$  and $\mathtt{success}(E(v)-1) == 1$: \label{condy1} 
    
    	    \State \hspace{0.5cm} Set $epoch(p):=E(v)$ and $view(p):=v$; \label{viewalt0}

 \State 

       \State  \textbf{Upon} first seeing a TC for epoch view $v$ with $E(v)\geq epoch(p)$:  \label{seef+1}
    
          \State \hspace{0.5cm}  \textbf{If} $\mathtt{lc}(p) < \mathtt{c}_v$ \textbf{then}:
           \State \hspace{1cm}  For each initial view $v'$ with $view(p)\leq v' <v$ send a $\mathtt{view}\ v'$ message to $\mathtt{lead}(v')$ if not already sent;
            \State \hspace{1cm}    Set $\mathtt{lc}(p) := \mathtt{c}_v$; \label{bump1} 
    
    \State \hspace{0.5cm} \textbf{If} $view(p)<v-1$ \textbf{then} set  $view(p):=v-1$ and $epoch(p):=E(v)-1$;  \label{viewalt1}

        \State \hspace{0.5cm}Send  an $\mathtt{epoch\ view}\  v$  message to all processors if not already sent;

\State

      \State  \textbf{Upon} first seeing an EC for epoch view $v$ with $E(v) > epoch(p)$: 
    \label{lum:unpause}
    

         \State \hspace{0.5cm}  Set $view(p):=v$ and $epoch(p):=E(v)$; \label{viewalt2}

            \State \Comment{--------- Epoch Synchronization Instructions (end) -----------------}\label{lum:EC2}

    \State \Comment{--------- View Synchronization Instructions (start) -----------------------------------}\label{lum:VC}
    
    \State 
    
    \State \textbf{Upon} $\mathtt{lc}(p)==\mathtt{c}_v$ for $v$ initial and $epoch(p)==E(v)$:
    
         \State \hspace{0.5cm} \textbf{If} $view(p)<v$, set $view(p):=v$; \label{viewalt3}  
    
    \State \hspace{0.5cm} Send a $\mathtt{view}\ v$ message to $\mathtt{lead}(v)$;

     \State 
     
     \State \textbf{If} $p==\mathtt{lead}(v)$ for initial view $v\geq view(p)$: 
     
     \State \hspace{0.5cm} \textbf{Upon} first seeing $\mathtt{view}\ v$ messages from $f+1$ distinct processors:

      \State \hspace{1cm} Form a VC for view $v$ and send to all processors; 
      
                    
      \State 
      
      \State \textbf{Upon} first seeing a VC for initial view $v>view(p)$; 
      
    \State \hspace{0.5cm}  \textbf{If} $\mathtt{lc}(p) < \mathtt{c}_v$ \textbf{then}:
       
           \State \hspace{1cm}  For each initial view $v'$ with $view(p)\leq v' <v$ send a $\mathtt{view}\ v'$ message to $\mathtt{lead}(v')$ if not already sent;

        \State \hspace{1cm} Set   $\mathtt{lc}(p) := \mathtt{c}_v$;  \label{bump2} 
    
    \State \hspace{0.5cm} Set  $view(p):=v$, $epoch(p):=E(v)$; \label{viewalt4}

     \State \Comment{--------- View Synchronization Instructions (end) ------------------}

     \State \Comment{--------- bump local-clock forward on QC -------------------------------------------------}
     
     \State 
     
     \State \textbf{Upon} first seeing a QC for view $v\geq view(p)$: 
     
     \State \hspace{0.5cm} \textbf{If} $\mathtt{lc}(p)<\mathtt{c}_{v+1}$ \textbf{then}:
     
                \State \hspace{1cm}  For each initial view $v'$ with $view(p)\leq v' <v$ send a $\mathtt{view}\ v'$ message to $\mathtt{lead}(v')$ if not already sent;

        \State \hspace{1cm}  Set $\mathtt{lc}(p):= \mathtt{c}_{v+1}$.  \label{bump3} 
     
          \State \hspace{0.5cm} \textbf{If} $v+1$ is a non-epoch view \textbf{then} set $view(p):=v+1$ and $epoch(p):=E(v+1)$;  \label{viewalt5} 
          
                    \State \hspace{0.5cm} \textbf{If} $v+1$ is an epoch view and $view(v)<v$ \textbf{then} set $view(p):=v$ and $epoch(p):=E(v)$;  \label{viewalt6}

\end{algorithmic}
    \caption{The instructions for processor $p$}
    \label{alg:lum}
\end{algorithm*} 

\section{Proof of correctness} \label{verification} 

We say that processor $p$  is in  epoch $e$ at time $t$ if the  local variable $epoch(p)=e$ at time $t$.
We say that $p$  \emph{enters} epoch $e$ at time $t$ if  $t$ is the least time at which $p$ is in epoch $e$. We talk about views similarly.  

We begin with a sequence of three simple lemmas that formally establish conditions on the relationship between $t$, $view(p)$, $epoch(p)$ and $\mathtt{lc}(p)$. 

\begin{lemma} \label{view/epoch} 
If honest $p$ is in view $v$ and epoch $e$ at time $t$, then $E(v)=e$. 
\end{lemma}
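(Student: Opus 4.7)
The plan is to prove the lemma by induction on the number of times processor $p$ executes a line of Algorithm~\ref{alg:lum} that modifies either $view(p)$ or $epoch(p)$. The inductive invariant we maintain is exactly the statement of the lemma: whenever $view(p)$ and $epoch(p)$ both have defined values after such an update, $E(view(p))=epoch(p)$.

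For the base case, I would note that the initialization sets $view(p):=-1$ and $epoch(p):=-1$, and from the definition $E(v):=\lfloor v/(10n)\rfloor$ we have $E(-1)=-1$, so the invariant holds vacuously (or trivially, depending on how one chooses to interpret the initial state; either works). For the inductive step, I would enumerate the lines of Algorithm~\ref{alg:lum} at which $view(p)$ or $epoch(p)$ can change, namely Lines~\ref{viewalt0}, \ref{viewalt1}, \ref{viewalt2}, \ref{viewalt3}, \ref{viewalt4}, \ref{viewalt5}, and \ref{viewalt6}, and check each one. In most cases ($\ref{viewalt0}$, $\ref{viewalt2}$, $\ref{viewalt4}$, $\ref{viewalt5}$) both variables are assigned simultaneously to $v$ and $E(v)$ (or $v+1$ and $E(v+1)$) respectively, so consistency is immediate. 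For Line~\ref{viewalt3}, only $view(p)$ changes, but the guard already requires $epoch(p)==E(v)$, so the invariant is preserved. For Line~\ref{viewalt6} the two assignments use the same $v$, and the guard ensures $view(p)<v$ prior to the update.

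The only slightly subtle cases are Lines~\ref{viewalt0} and~\ref{viewalt1}, which concern epoch views $v=V(e)$. Here I would invoke the arithmetic fact that for $v=V(e)=10ne$ we have $E(v)=e$ and $E(v-1)=e-1$, so that setting $(view(p),epoch(p))=(v-1,E(v)-1)$ in Line~\ref{viewalt1} is indeed consistent with the invariant. This relies on $10n\mid v$, which is precisely the characterization of an epoch view.

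No step here looks difficult; the main obstacle, if any, is simply being careful that every update site is covered and that the arithmetic identity $E(V(e)-1)=e-1$ is used exactly where needed. In particular, I should verify that nothing in the algorithm ever writes to $view(p)$ or $epoch(p)$ without also restoring the invariant (for instance, that Line~\ref{viewalt3} never fires when $epoch(p)\neq E(v)$, which the written guard ensures). With each line dispatched, the induction goes through.
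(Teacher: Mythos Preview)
Your proposal is correct and follows essentially the same approach as the paper's proof: establish the invariant at initialization (where $view(p)=epoch(p)=-1$ and $E(-1)=-1$), then check that each of lines~\ref{viewalt0}--\ref{viewalt6} preserves $E(view(p))=epoch(p)$. The paper's version is simply terser, collapsing all the per-line checks into the single observation that each instruction explicitly sets $view$ and $epoch$ so that $E(v)=e$; your more detailed case-by-case treatment (including the arithmetic check $E(V(e)-1)=e-1$ for line~\ref{viewalt1}) is a faithful elaboration of that same argument.
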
 
\begin{proof} 
 All honest processors begin in view -1 and epoch -1. Since $E(-1)=-1$, the hypothesis therefore holds at initialization.  Now consider any instruction carried out at time $t\geq 0$. The view or epoch that $p$ is in can only be altered by lines \ref{viewalt0}, \ref{viewalt1}, \ref{viewalt2}, \ref{viewalt3}, \ref{viewalt4}, \ref{viewalt5}, or \ref{viewalt6}. In each case, the instruction explicitly sets the view $v$ and epoch $e$ of $p$ so that $E(v)=e$. 
\end{proof} 

The following lemma is required to show that (1) from the definition of the view synchronization task in Section \ref{modelsec} is satisfied. 

\begin{lemma} 
If honest $p$ is in view $v_1$ and epoch $e_1$ at time $t_1$, and is in  view $v_2$ and epoch $e_2$ at time $t_2>t_1$, then $e_2\geq e_1$,  $v_2\geq v_1$, and $\mathtt{lc}(p,t_2)\geq \mathtt{lc}(p,t_1)$. 
\end{lemma}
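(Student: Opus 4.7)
The plan is to prove the three monotonicity claims independently by a straightforward case analysis over the lines of Algorithm~\ref{alg:lum} that can modify the corresponding piece of local state, combined with a trivial induction on the number of write-events occurring in the interval $(t_1,t_2]$. Between write-events, $view(p)$ and $epoch(p)$ are constant, and $\mathtt{lc}(p)$ either advances in real time (when unpaused) or stays fixed (when paused); either way none of the three quantities decreases. Thus it suffices to verify that each individual assignment in the pseudocode is non-decreasing.

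For $view(p)$, I would enumerate the lines that can write to it, namely \ref{viewalt0}, \ref{viewalt1}, \ref{viewalt2}, \ref{viewalt3}, \ref{viewalt4}, \ref{viewalt5}, \ref{viewalt6}. Lines \ref{viewalt0}, \ref{viewalt2}, and \ref{viewalt4} fire inside ``\textbf{Upon} first seeing \ldots'' clauses that explicitly require $v>view(p)$ (via $v>view(p)$ in the epoch-view case, the EC-receipt case, and the VC-receipt case respectively). Lines \ref{viewalt1}, \ref{viewalt3}, and \ref{viewalt6} are each guarded by an explicit conditional ($view(p)<v-1$, $view(p)<v$, $view(p)<v$). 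Line \ref{viewalt5} sets $view(p):=v+1$ upon receipt of a QC for view $v\geq view(p)$, hence $v+1>view(p)$. In every case the new value strictly exceeds the old, so $view(p)$ is non-decreasing.

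For $epoch(p)$, the argument is essentially a corollary: every line that writes $epoch(p)$ also writes $view(p)$ in the same atomic step, setting $epoch(p):=E(v)$ for the new $view(p)=v$ (or $v-1$ in the case of line \ref{viewalt1}). Since we have just shown $view(p)$ is non-decreasing, and $E(\cdot)=\lfloor\cdot/(10n)\rfloor$ is monotone, combining this with Lemma~\ref{view/epoch} (applied at both $t_1$ and $t_2$) yields $e_2=E(v_2)\geq E(v_1)=e_1$.

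For $\mathtt{lc}(p)$, I would observe that the only lines that explicitly overwrite $\mathtt{lc}(p)$ are \ref{bump1}, \ref{bump2}, and \ref{bump3}, all of which are placed inside a conditional of the form ``\textbf{If} $\mathtt{lc}(p)<\mathtt{c}_v$ \textbf{then} \ldots Set $\mathtt{lc}(p):=\mathtt{c}_v$'' (or $\mathtt{c}_{v+1}$), so each such assignment strictly increases the clock. Pause and unpause operations leave the value of $\mathtt{lc}(p)$ unchanged, and real-time progression is obviously monotone. The argument is almost pure bookkeeping; the only mildly subtle case is line \ref{viewalt1}, where $p$ sets $view(p):=v-1$ (rather than $v$) on receipt of a TC for epoch view $v$, which at first glance could decrease the view, but the explicit guard $view(p)<v-1$ rules this out. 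I do not expect any real technical obstacle beyond this careful enumeration.
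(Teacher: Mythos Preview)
Your proposal is correct and follows essentially the same approach as the paper: a line-by-line case analysis of the assignments to $view(p)$, $epoch(p)$, and $\mathtt{lc}(p)$, checking that each is guarded so as to be non-decreasing. One small inaccuracy: the guard on line~\ref{viewalt2} is $E(v)>epoch(p)$, not $v>view(p)$ as you state; the paper (and you, implicitly, since you already cite Lemma~\ref{view/epoch}) closes this by applying Lemma~\ref{view/epoch} at the moment just before the update to conclude $view(p)<v$ from $epoch(p)<E(v)$.
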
 
\begin{proof} 
Suppose  honest $p$ is in view $v_1$ and epoch $e_1$ at time $t_1$ and is in  view $v_2$ and epoch $e_2$ at time $t_2>t_1$. That $\mathtt{lc}(p,t_2)\geq \mathtt{lc}(p,t_1)$ follows from the fact that $\mathtt{lc}(p)$ advances in real time, unless paused (line \ref{pauseline}) or bumped strictly forward (lines \ref{bump1}, \ref{bump2}, or \ref{bump3}). 

To show that $e_2\geq e_1$ and $v_2\geq v_1$, consider the various instructions that can alter $view(p)$ or $epoch(p)$. 

\vspace{0.1cm} 
\noindent Lines \ref{viewalt0}, \ref{viewalt1}, \ref{viewalt4}, \ref{viewalt5}, \ref{viewalt6}.   If one of these instructions  sets $view(p):=v$ and $epoch(p):=E(v)$ for some $v$,  then the stated condition for carrying out the instruction explicitly requires that, before this update, $p$ is in a  view $<v$. By Lemma \ref{view/epoch}, $p$ is also in an epoch $\leq E(v)$ before the update. 


 \vspace{0.1cm} 
\noindent Line \ref{viewalt2}. If this instruction sets $view(p):=v$ and $epoch(p):=E(v)$, then the stated condition for carrying out the instruction requires that,  before this update, $p$ is in an epoch $<E(v)$.  By Lemma \ref{view/epoch}, $p$ is also in a view $<v$ before the update. 

 \vspace{0.1cm} 
\noindent Line \ref{viewalt3}. This instruction does not change $epoch(p)$. If the instruction sets $view(p):=v$,  then the stated condition for carrying out the instruction explicitly requires that, before this update, $p$ is in a view $<v$. 
%
%
\end{proof} 

\begin{lemma} \label{viewint} 
Consider $v_0\geq 0$ which is initial. If $p$ is honest then: 
\begin{enumerate} 
\item[(i)] If $p$ is in view $v_0$ then $\mathtt{lc}(p)\in [\mathtt{c}_{v_0}, \mathtt{c}_{v_0+2}]$. 
\item[(ii)] If $p$ is in view $v_0+1$ then $\mathtt{lc}(p)\in [\mathtt{c}_{v_0+1}, \mathtt{c}_{v_0+2}]$.
\end{enumerate} 
\end{lemma}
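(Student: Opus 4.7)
The plan is to prove each bound by a direct case analysis on the pseudocode lines that update $view(p)$ and $\mathtt{lc}(p)$, leveraging the monotonicity of $\mathtt{lc}(p)$ established in the previous lemma. For the lower bounds of (i) and (ii), it suffices to verify that at the instant $p$ sets $view(p) := v_0$ (respectively $v_0+1$), its local clock already satisfies $\mathtt{lc}(p) \geq \mathtt{c}_{v_0}$ (respectively $\mathtt{c}_{v_0+1}$), since monotonicity then preserves the bound for as long as $p$ remains in the view.

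For the lower bound in (i), I would enumerate the lines that can set $view(p) := v_0$: lines \ref{viewalt0} and \ref{viewalt3} trigger only when $\mathtt{lc}(p) = \mathtt{c}_{v_0}$; lines \ref{viewalt4} and \ref{viewalt5} are preceded by the bumps \ref{bump2} and \ref{bump3} which force $\mathtt{lc}(p) \geq \mathtt{c}_{v_0}$; line \ref{viewalt6} would require $v+1$ to be an epoch view with $v = v_0$, but epoch views are even and $v_0$ is even, so $v_0+1$ cannot be an epoch view, ruling this out. The only delicate path is line \ref{viewalt2}, where $p$ enters the epoch view $v_0$ via an EC; here one observes that an EC for $v_0$ consists of $2f+1$ epoch-view messages and hence contains a TC as a subset, so by line \ref{seef+1} (triggering the bump on line \ref{bump1}) $p$ has already raised $\mathtt{lc}(p)$ to at least $\mathtt{c}_{v_0}$ before acting on the EC. The lower bound in (ii) is symmetric: the only ways to enter the non-initial view $v_0+1$ are through lines \ref{viewalt1}, \ref{viewalt5}, and \ref{viewalt6}, each of which couples the view assignment with a clock bump to at least $\mathtt{c}_{v_0+1}$.

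For the upper bounds, I would argue by contradiction. Suppose $p$ is in view $v_0$ or $v_0+1$ at time $t$ with $\mathtt{lc}(p,t) > \mathtt{c}_{v_0+2}$, and let $t^{\ast}$ be the first time $\mathtt{lc}(p)$ strictly exceeds $\mathtt{c}_{v_0+2}$. The clock can advance in only four ways: by real-time drift, or through the bump lines \ref{bump1}, \ref{bump2}, \ref{bump3}. In the real-time case, at the moment $\mathtt{lc}(p) = \mathtt{c}_{v_0+2}$ the initial-view $v_0+2$ triggers an entry: line \ref{viewalt3} fires if $v_0+2$ is non-epoch (the epoch equality being automatic because $v_0$, $v_0+1$, $v_0+2$ lie in a common epoch whenever $v_0+2$ is non-epoch), or lines \ref{viewalt0}/\ref{condy1} fire and otherwise \ref{lum:pause} pauses the clock exactly at $\mathtt{c}_{v_0+2}$ until an event forces a view advance. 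Either way the clock cannot strictly exceed $\mathtt{c}_{v_0+2}$ while $view(p) \in \{v_0, v_0+1\}$. In each of the three bump cases, the same instruction that advances the clock simultaneously assigns $view(p)$: a QC for view $v \geq v_0+1$ pushes $view(p)$ into $\{v,v+1\}$; a VC for view $v \geq v_0+2$ sets $view(p):=v$; a TC for an epoch view $v > v_0+2$ sets $view(p):=v-1 > v_0+1$. A routine case check shows that whenever the clock is pushed strictly above $\mathtt{c}_{v_0+2}$, $view(p)$ is simultaneously moved past $v_0+1$, giving the desired contradiction.

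The main obstacle is the epoch-boundary bookkeeping: verifying in the real-time sub-case that, when $\mathtt{lc}(p)$ reaches $\mathtt{c}_{v_0+2}$ with $view(p) \in \{v_0,v_0+1\}$, the current $epoch(p)$ matches $E(v_0+2)$ so that line \ref{viewalt3} actually fires. This is handled by distinguishing whether $v_0+2$ is an epoch view (use the pause mechanism) or not (use the fact that $v_0$, $v_0+1$, $v_0+2$ then share an epoch, together with Lemma \ref{view/epoch}).
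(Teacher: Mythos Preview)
Your proposal is correct and follows essentially the same route as the paper: a line-by-line case analysis on the instructions that update $view(p)$ and $\mathtt{lc}(p)$. The paper packages this as an induction on (discrete) time; you instead split into a direct lower-bound argument (the bump lines immediately give $\mathtt{lc}(p)\ge \mathtt{c}_{v_0}$, no induction hypothesis needed) and a first-time-of-violation argument for the upper bound. The case enumerations match, including the delicate observation that an EC for an epoch view $v$ contains a TC, so line~\ref{bump1} has already fired before line~\ref{viewalt2} executes.

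There is one small circularity to patch in your upper-bound argument. In the real-time sub-case you appeal to $epoch(p)=E(v_0+2)$ at the moment $\mathtt{lc}(p)=\mathtt{c}_{v_0+2}$, justified by ``$v_0,v_0+1,v_0+2$ share an epoch''. That justification presumes $view(p)\in\{v_0,v_0+1\}$ at that moment, but monotonicity of $view(p)$ together with $view(p,t)\in\{v_0,v_0+1\}$ only yields $view(p)\le v_0+1$ at the earlier moment; ruling out $view(p)<v_0$ already uses the lemma for smaller views. The paper's explicit induction on time handles this automatically (when it reaches $v_0$ at time $t$, the statement is known for all earlier times and hence all smaller views). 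Your argument is repaired with no new ideas by taking the minimal counterexample over all pairs $(v_0,t)$ rather than fixing $v_0$ first, or simply by framing the whole proof as induction on time as the paper does.
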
 
\begin{proof} 
The proof is by induction on (discrete) time and is trivially satisfied at initialization. First, consider the instructions that can redefine $view(p)$. 

\vspace{0.1cm} 
\noindent Lines \ref{viewalt0} and \ref{viewalt3}. If either of these instructions sets $view(p):=v$ for some $v$, then $\mathtt{lc}(p)=\mathtt{c}_v$. 

\vspace{0.1cm} 
\noindent Lines \ref{viewalt1} and \ref{viewalt6}. If either of these instruction sets $view(p):=v$ for some $v$, then before this update $p$ is in a view $<v$. It follows from the induction hypothesis that $\mathtt{lc}(p)\leq \mathtt{c}_{v+1}$. Lines \ref{bump1} and \ref{bump3} therefore ensure that $\mathtt{lc}(p) = \mathtt{c}_{v+1}$. 

\vspace{0.1cm} 
\noindent Line \ref{viewalt2}. If this instruction sets $view(p):=v$, then $epoch(p)<E(v)$ prior to the update. By Lemma \ref{view/epoch} and the induction hypothesis it follows that $\mathtt{lc}(p)\leq \mathtt{c}_v$ before the update. 
Since the condition of line \ref{seef+1} must be already be satisfied for epoch view $v$ when $p$ sees an EC for view $v$, line \ref{bump1} ensures that $\mathtt{lc}(p)=\mathtt{c}_v$ in this case.

\vspace{0.1cm} 
\noindent Lines \ref{viewalt4}. If this instruction sets $view(p):=v$, then before this update $p$ is in a view $<v$. It follows from the induction hypothesis, and since $v$ is initial, that $\mathtt{lc}(p)\leq \mathtt{c}_v$. Line \ref{bump2} therefore ensures that $\mathtt{lc}(p)=\mathtt{c}_v$ in this case. 

\vspace{0.1cm} 
\noindent Line \ref{viewalt5}. If this instruction sets $view(p):=v+1$, then $p$ was in view $\leq v$ before the update. It follows from the induction hypothesis that $\mathtt{lc}(p)\leq \mathtt{c}_{v+2}$.  
Line \ref{bump3} therefore ensures that $\mathtt{lc}(p)\in [\mathtt{c}_{v+1},\mathtt{c}_{v+2}]$ in this case.

\vspace{0.2cm} 
Now let $v_0$ and $p$ be as in the statement of the lemma. The analysis above shows that if any instruction sets $view(p):=v_0$, then $\mathtt{lc}(p)\in [\mathtt{c}_{v_0},\mathtt{c}_{v_0+2}]$ at that point, and that if any instruction sets  $view(p):=v_0+1$, then $\mathtt{lc}(p)\in [\mathtt{c}_{v_0+1},\mathtt{c}_{v_0+2}]$ at that point. To complete the proof of the lemma, it suffices to show that if there exists a first time $t_1$ at which $\mathtt{lc}(p)>\mathtt{c}_{v_0+2}$, the instructions set $view(p)>v_0$ by that time. So, suppose there exists such a time $t_1$. There are then two cases to consider: 

\vspace{0.1cm} 
\noindent \textbf{Case 1}. If $p$ receives an EC, QC or VC for any view  $\geq v_0+2$, or a TC for any view $>v_0+2$,   at any time $\leq t_1$ while in view $v$, then $view(p)$ is set to a value $\geq v_0+2$ at that point (by one of lines \ref{viewalt1}, \ref{viewalt2},   \ref{viewalt4}, \ref{viewalt5}, or \ref{viewalt6}).

\vspace{0.1cm} 
\noindent \textbf{Case 2}. Case 1 does not hold. Then, since $p$ only bumps its clock past $\mathtt{c}_{v_0+2}$ upon seeing an EC, QC, VC, or TC of the kind described in Case 1,  there must exist a least $t_0<t_1$ at which $\mathtt{lc}(p)=\mathtt{c}_{v_0+2}$. If $v_0+2$ is a non-epoch view, then line \ref{viewalt3} sets $view(p):=v_0+2$ at $t_0$. If view $v_0+2$ is an epoch view $V(e+1)$, then consider the possible subcases. 
If the local variable $\mathtt{success}(e)=0$ at $t_0$, then $p$'s local clock is paused at $t_0$. If the local clock becomes unpaused because  $p$ sees either a QC, a VC or an EC for a view $\geq v_0+2$, or a TC for a view $>v_0+2$, then this contradicts the existence of $t_1$ and the fact that Case 1 does not hold.  
It must therefore hold that either $\mathtt{success}(e)=1$ at $t_0$ or that $p$'s local clock is paused at $t_0$ and becomes unpaused upon seeing $\mathtt{success}(e)=1$. Line \ref{viewalt0} then sets $view(p):=v_0+2$ at $t_0$. 
\end{proof}

Next, we prove a sequence of lemmas which bound the time honest processors spend in each epoch after GST.

\begin{lemma} \label{enterbefore} 
If $p$ is honest and enters epoch $e$, then at least $f+1$ honest processors must previously have entered epoch $e-1$. 
\end{lemma}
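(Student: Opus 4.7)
The plan is to prove the lemma by strong induction on the sequence of events in which honest processors enter epochs, ordered by real time. The base case $e=0$ is immediate: every honest processor is initialized with $epoch(p)=-1$, so every honest processor has trivially ``entered'' epoch $-1$. For the induction step, the hypothesis will say: for every prior event at time $t'<t$ in which some honest $p'$ enters some epoch $e'\geq 0$, at least $f+1$ honest processors entered epoch $e'-1$ before $t'$.

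The core of the argument is a case analysis on the instruction that sets $epoch(p):=e$ at time $t$, i.e., on which of lines \ref{viewalt0}--\ref{viewalt6} fires. The direct cases are the easy ones. If $p$ enters via line \ref{viewalt0} because $\mathtt{success}(e-1)=1$, then $p$ has observed $2f+1$ distinct processors each generate 10 QCs for views inside epoch $e-1$; picking any such QC and applying property $(\diamond_2)$ together with Lemma \ref{view/epoch} gives $f+1$ honest processors that were in epoch $e-1$ at some earlier time. If $p$ enters via line \ref{viewalt2} upon seeing an EC for $V(e)$, the EC aggregates $2f+1$ $\mathtt{epoch\ view}\ V(e)$ messages, hence has $f+1$ honest signers. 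For each honest signer that sent its message via the clock path, the pause rule (line \ref{pauseline}) forces $view<V(e)$ while Lemma \ref{viewint} forces $\mathtt{lc}\leq \mathtt{c}_{view+2}$, which pins the view into $[V(e)-2,V(e))\subseteq$ epoch $e-1$; so that honest processor had entered epoch $e-1$ before $t$.

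The indirect cases all reduce, via induction, to the existence of some honest processor that entered epoch $e$ (or higher) before $t$. If $p$ enters via line \ref{viewalt4} from a VC for an initial view $v$ with $E(v)=e$, then $\geq 1$ honest processor signed $\mathtt{view}\ v$, and honest processors only sign $\mathtt{view}\ v$ upon actually being in view $v$, hence in epoch $e$. If $p$ enters via line \ref{viewalt5} or line \ref{viewalt6} from a QC for a view $v$ in epoch $e$, then by $(\diamond_2)$ some $f+1$ honest processors were in view $v$, hence in epoch $e$. If $p$ enters via line \ref{viewalt1} from a TC for $V(e+1)$, then among the $f+1$ signers at least one is honest, and among \emph{all} honest senders of $\mathtt{epoch\ view}\ V(e+1)$ in the execution, the temporally first must have sent via the clock path (path (ii) requires a prior TC containing an earlier honest sender, contradicting ``first''), and the clock-path analysis above (shifted by one epoch) places that processor in epoch $e$ when it sent. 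In each of these situations the inductive hypothesis, applied to the earlier entry of an honest processor into epoch $e$, produces the required $f+1$ honest predecessors in epoch $e-1$ before $t$.

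The only remaining subtlety is the EC case (line \ref{viewalt2}) when an honest signer sends its $\mathtt{epoch\ view}\ V(e)$ message upon receiving a TC while already in a view $\geq V(e)$: the signer may have skipped epoch $e-1$. Here one either observes that line \ref{viewalt1}'s epoch assignment $epoch:=E(v)-1=e-1$ fired earlier (so the signer did enter $e-1$), or concludes the signer was already in epoch $\geq e$ and falls back on the inductive hypothesis. I expect this bookkeeping—partitioning the $f+1$ honest signers between those that directly witness epoch $e-1$ and those that route through induction—to be the main obstacle, since one has to avoid double-counting and confirm that the recursion terminates (which it does because every appeal to the induction hypothesis strictly decreases the real-time index of the entry event).
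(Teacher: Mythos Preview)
Your overall inductive approach is workable, but there is a genuine gap in your handling of line~\ref{viewalt4} (the VC case). You assert that ``honest processors only sign $\mathtt{view}\ v$ upon actually being in view $v$,'' but this is false: in Algorithm~\ref{alg:lum}, whenever an honest processor is about to bump its clock upon receiving a TC, VC, or QC for some view $u$, it first sends a $\mathtt{view}\ v'$ message for every initial $v'$ with $view(p)\le v'<u$ that it has not already sent. At the moment of sending, such a processor may sit in a view strictly below $v'$ and hence (by Lemma~\ref{view/epoch}) in an epoch strictly below $E(v')=e$. The gap is patchable---immediately after processing the trigger, that processor lands in some epoch $\ge e$, so one can still route through the inductive hypothesis (iterated down to $e$ if it lands strictly above)---but your argument as written does not cover this, and the required iteration is precisely the kind of bookkeeping you flagged as the main obstacle in the EC case.

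The paper's proof is far shorter and avoids all of this. It reduces to the \emph{first} honest entry into epoch $e$ and handles only two cases: entry via an EC for $V(e)$, and entry via $\mathtt{success}(e{-}1)=1$. This restriction is justified (though the paper does not spell it out) because every other route into epoch $e$---a VC or QC for a view in epoch $e$, or a TC for $V(e{+}1)$---requires at least one honest processor already in an epoch $\ge e$ when the certificate was formed; your own analyses of lines~\ref{viewalt1}, \ref{viewalt5}, \ref{viewalt6} essentially establish exactly this. For the EC case, since no honest processor is yet in epoch $e$, each of the $f{+}1$ honest signers of the EC must be in epoch $e{-}1$ when it sends (whether via the clock path, or via the TC path, which through line~\ref{viewalt1} places it in epoch $e{-}1$); for the success case, any witnessing QC for a view in epoch $e{-}1$ already puts $f{+}1$ honest processors there by~$(\diamond_2)$. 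No induction on entry events is needed.
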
 
\begin{proof} 
We prove the lemma when $p$ is the first honest processor to enter epoch $e$, then it a fortiori holds for honest processors entering the epoch later. If $p$ enters epoch $e$ upon seeing an EC for view $V(e)$, then at least $f+1$ honest processors must have contributed to that EC while in epoch $e-1$. If $p$ enters epoch $e$ upon seeing $\mathtt{success}(e-1)=1$, then at least one view in epoch $e-1$ must have already produced a QC, and at least $f+1$ honest processors must have contributed to that QC while in epoch $e-1$. 
\end{proof}

\begin{lemma} \label{fin0} 
If an honest processor is in epoch $e$ at $t\geq$ GST, then all honest processors are in epochs $\geq e-1$ by time $t+\Delta$. 
\end{lemma}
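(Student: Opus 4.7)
The plan is to reduce the lemma to the earliest moment at which the ``level $e$'' threshold is crossed. Let $\tau$ be the least time at which any honest processor has $epoch\geq e$, and let $p^*$ be an honest processor whose epoch first becomes some $e^*\geq e$ at time $\tau$; since the hypothesis gives an honest $p$ with $epoch(p)=e$ at $t$, we have $\tau\leq t$, so it suffices to show that every honest processor is in epoch $\geq e-1$ by time $\tau+\Delta$. The proof then proceeds by case analysis on the instruction of Algorithm \ref{alg:lum} via which $p^*$ updates $epoch(p^*)$ to $e^*$ at time $\tau$; the candidates are lines \ref{viewalt0}, \ref{viewalt1}, \ref{viewalt2}, \ref{viewalt4}, \ref{viewalt5}, and \ref{viewalt6}.

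Four of these six cases, I would show, are ruled out by the minimality of $\tau$. For lines \ref{viewalt5} and \ref{viewalt6} the triggering QC is for a view $v$ with $E(v)=e^*$; by $(\diamond_2)$, at least $f+1$ honest processors acted as if in view $v$ (and hence, by Lemma \ref{view/epoch}, in epoch $e^*$) at some time strictly before $\tau$, contradicting minimality. For line \ref{viewalt4} the triggering VC for an initial view $v$ with $E(v)=e^*$ has at least one honest signer; I would use a first-signer argument, considering the honest processor that first signed a $\mathtt{view}\ v'$ message for some initial $v'\geq V(e^*)$, and tracing each possible trigger for that signature (clock path, or a TC/VC/QC for a view above $v'$) back to an honest processor already in epoch $\geq e^*$ strictly before $\tau$. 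For line \ref{viewalt1} the triggering TC for $V(e^*+1)$ has $f+1$ $\mathtt{epoch\ view}\ V(e^*+1)$ signatures with at least one honest; considering the chronologically earliest honest signer $h$ of this TC, $h$ cannot have signed upon seeing a TC (at $h$'s signing time strictly fewer than $f+1$ such messages exist, since only Byzantines could have signed earlier), so $h$ must have signed after $\mathtt{lc}(h)$ reached $\mathtt{c}_{V(e^*+1)}$; Lemma \ref{viewint} then forces $view(h)\in\{V(e^*+1)-2,V(e^*+1)-1\}$, placing $h$ in epoch $e^*\geq e$ strictly before $\tau$.

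This leaves lines \ref{viewalt0} and \ref{viewalt2}, both of which propagate cleanly. In the EC case (line \ref{viewalt2}), the EC for $V(e^*)$ aggregates at least $f+1$ $\mathtt{epoch\ view}\ V(e^*)$ messages broadcast by honest processors at times $\leq \tau$; since $t\geq\text{GST}$, every such message reaches every honest $q$ by $\max(\text{send time},\text{GST})+\Delta\leq t+\Delta$, so $q$ then holds a TC for $V(e^*)$ and the TC-handling instruction (line \ref{viewalt1}) sets $epoch(q)\geq e^*-1\geq e-1$. In the success case (line \ref{viewalt0}), $\mathtt{success}(e^*-1)=1$ at $p^*$ by time $\tau$, so $p^*$ has already received QCs broadcast by at least $f+1$ honest leaders for views in epoch $e^*-1$; the same delivery bound gives every honest $q$ at least one of these QCs by $t+\Delta$, and lines \ref{viewalt5}/\ref{viewalt6} then drive $epoch(q)\geq e^*-1\geq e-1$.

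The main obstacle in this plan is the TC case (line \ref{viewalt1}): a TC carries only $f+1$ signatures, so only a single guaranteed honest broadcast, which by itself is not enough to flood a full TC to every honest processor within time $\Delta$. The key insight is that this case need not be propagated at all --- it is ruled out entirely via the minimality of $\tau$ combined with Lemma \ref{viewint}.
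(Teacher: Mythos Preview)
Your proposal is correct and follows the same overall strategy as the paper: reduce to the first honest entry into epoch $e$, observe that this entry must be via an EC for $V(e)$ or via the success criterion for epoch $e-1$, and then argue that in either case enough honest broadcasts have already occurred to pull every honest processor into epoch $\geq e-1$ within $\Delta$ after GST. The paper's proof simply asserts the two-case dichotomy and handles each case in one sentence; you take the more careful route of explicitly enumerating all six epoch-updating instructions and ruling out the four ``impossible'' ones via minimality of $\tau$, which fills in what the paper leaves implicit (and what Lemma~\ref{enterbefore} also glosses over).

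One small correction in your line~\ref{viewalt1} case: when you write ``the chronologically earliest honest signer $h$ of this TC'' and conclude ``only Byzantines could have signed earlier,'' this is not quite right --- honest processors \emph{not appearing in this particular TC} may have sent $\mathtt{epoch\ view}\ V(e^*{+}1)$ messages earlier, so $h$ could in principle have relayed upon seeing a TC built from those. The fix is to let $h$ be the first honest processor to send \emph{any} $\mathtt{epoch\ view}\ V(e^*{+}1)$ message; then your Lemma~\ref{viewint} argument goes through unchanged. Also, your sentence ``it suffices to show \ldots by time $\tau+\Delta$'' is slightly off when $\tau<\text{GST}$; your later delivery bound $\max(\text{send time},\text{GST})+\Delta\leq t+\Delta$ is the correct statement, and the conclusion should be phrased as ``by $t+\Delta$'' throughout.
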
 
\begin{proof} 
We prove the lemma when $p$ is the first honest processor to enter epoch $e$, then it a fortiori holds for honest processors entering the epoch later. If $p$ enters epoch $e$ upon seeing an EC for view $V(e)$, then all honest processors see a TC for view $V(e)$ by $t+\Delta$. If $p$ enters epoch $e$ upon seeing the success criterion satisfied for epoch $e-1$, then some honest leader must have produced a QC for a view in epoch $e-1$, which will be seen by all honest processors by time $t+\Delta$. 
\end{proof}

\begin{lemma} \label{catchup} 
If an honest processor is in epoch $e$ at $t\geq$ GST, then all honest processors are in epochs $\geq e$ by time $t+O(n\Delta)$. 
\end{lemma}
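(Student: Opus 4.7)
The plan is to invoke Lemma \ref{fin0}---which places all honest processors in epoch $\geq e-1$ by $t+\Delta$---and then to bound, for any honest $q$ still in epoch $e-1$ at that moment, the additional time needed for $q$ to reach epoch $\geq e$. Once $q$ is in epoch $e-1$, its local clock is unpaused (the only epoch view of $e-1$ is $V(e-1)$, which $q$ has already traversed) and advances in real time; by Lemma \ref{viewint}, it is bounded above by $\mathtt{c}_{V(e)}$. Thus within $\mathtt{c}_{V(e)}-\mathtt{c}_{V(e-1)}=10n\Gamma=O(n\Delta)$ of $t$, either $q$ is lifted into epoch $\geq e$ by an incoming EC, QC, or VC for a view $\geq V(e)$ (or a TC for a view $>V(e)$), or $q$'s clock reaches $\mathtt{c}_{V(e)}$---at which point $q$ either enters epoch $e$ immediately if $\mathtt{success}(e-1)=1$ at $q$, or pauses and, after another $\Delta$, broadcasts an $\mathtt{epoch\ view}\ V(e)$ message.

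To finish, I would pigeon-hole over the $2f+1$ honest processors once each of their clocks has reached $\mathtt{c}_{V(e)}$. Since $2f+1>2f$, either at least $f+1$ honest processors pause at $\mathtt{c}_{V(e)}$, or at least $f+1$ of them enter epoch $e$ via the success criterion. In the first case, the $f+1$ paused processors' $\mathtt{epoch\ view}\ V(e)$ messages constitute a TC for $V(e)$ within $\Delta$; this TC forces every remaining honest processor in an epoch $\leq e-1$ to send its own $\mathtt{epoch\ view}\ V(e)$ message (per line \ref{seef+1}), yielding an EC for $V(e)$ within a further $\Delta$, which then propagates to $q$ and lifts $q$ into epoch $e$. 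In the second case, the $f+1$ honest processors who have already entered epoch $e$ proceed through its views; by the leader-rotation rule of Section \ref{formal}, within $2n$ consecutive views---i.e.~within $2n\Gamma=O(n\Delta)$ further time---they reach an initial view $v^\ast\geq V(e)$ with an honest leader. That leader collects $f+1$ $\mathtt{view}\ v^\ast$ messages and broadcasts a VC for $v^\ast$, whose receipt by $q$ unpauses $q$'s clock and promotes $q$ into epoch $e$ via line \ref{viewalt4}.

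In each sub-case, the additional time after $q$'s clock reached $\mathtt{c}_{V(e)}$ is $O(n\Delta)$, yielding total delay $O(n\Delta)$ from $t$. I expect the main obstacle to be the second sub-case, where one must verify that the $f+1$ honest processors who entered epoch $e$ via success remain in epoch $e$ and clock-synchronized enough to supply the honest leader of $v^\ast$ with $f+1$ concurrent $\mathtt{view}\ v^\ast$ messages, despite Byzantine attempts to disrupt view progression or to prematurely trigger an epoch change. This will lean on Lemma \ref{viewint}, on the monotonicity of honest clocks, and on the Fever-style invariant from Sections \ref{fev} and \ref{SS} that each honest-generated QC bumps the recipient's clock forward, keeping the $f+1$ honest epoch-$e$ processors clustered within a $\Gamma$-window as they advance.
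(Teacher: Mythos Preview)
Your overall decomposition---invoke Lemma~\ref{fin0}, bound the time for honest clocks to reach $\mathtt{c}_{V(e)}$, then pigeon-hole into ``at least $f+1$ pause'' versus ``at least $f+1$ enter via success''---mirrors the paper's structure, and your first sub-case (the $f+1$ paused processors generate a TC for $V(e)$, which cascades to an EC) is essentially the paper's argument.

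The second sub-case, however, contains a genuine gap. You argue that the $f+1$ honest processors in epoch $e$ will reach some initial view $v^\ast$ with honest leader, and that this leader collects their $f+1$ $\mathtt{view}\ v^\ast$ messages to form a VC. But the leader forms a VC only while $v^\ast \geq view(\mathtt{lead}(v^\ast))$. If the leader is one of the $f+1$ processors already in epoch $e$---and in particular one with a fast clock---then by the time the \emph{slowest} of the $f+1$ sends its $\mathtt{view}\ v^\ast$ message, the leader may have advanced well past $v^\ast$ and will not form the VC. Your appeal to ``the Fever-style invariant'' does not close this: Lemma~\ref{nodecrease} and Lemma~\ref{largebound} are established \emph{after} Lemma~\ref{catchup} in the paper (and their statements rely on $\mathtt{start}_e$, whose well-definedness is justified via Lemma~\ref{infepochs}, which in turn uses Lemma~\ref{catchup}); and even setting aside circularity, Lemma~\ref{largebound} only gives $\mathtt{hg}_{f+1,\mathtt{start}_e} < (4f+2)\Gamma$, which is far too loose to guarantee the leader is still eligible.

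The paper handles this sub-case by a different and much simpler mechanism that requires no synchronization among the $f+1$: either some honest processor enters epoch $e+1$ within $O(n\Delta)$ (and then Lemma~\ref{fin0} finishes), or all $f+1$ processors in epoch $e$ run their clocks through the entire epoch, pause at $\mathtt{c}_{V(e+1)}$, and send $\mathtt{epoch\ view}\ V(e+1)$ messages. Those $f+1$ messages constitute a TC for $V(e+1)$, and receipt of that TC lifts every straggler directly into epoch $e$ via line~\ref{viewalt1}. This avoids any need to argue that an honest leader within epoch $e$ can assemble a VC.
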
 
\begin{proof} 

 
Let $p_1$ be the first honest processor to enter epoch $e$.  Then $p_1$ enters epoch $e$ either because they see an EC for view $V(e)$, or because they see the success criterion satisfied for epoch $e-1$. In the first case, all honest processors see a TC for view $V(e)$ by time $t+\Delta$. If any honest processor has entered an epoch   $>e$ by this time, then the claim of the lemma follows from Lemma \ref{fin0}. So, suppose otherwise. Then all honest processors  send an $\mathtt{epoch\ view}\ V(e)$ message upon seeing the TC (if they have not already done so). All honest processors therefore see an EC for view $V(e)$ and enter an epoch $\geq e$ by time  $t+2\Delta$.  In the second case, some honest leader must have produced a QC for one of the last $2n$ views in epoch $e-1$ (actually for one of the last $4f+1$ views of the epoch, but the bound $2n$ suffices for our calculations below), which means that all honest processors see that QC and are in an epoch $\geq e-1$ by time  $t+\Delta$ (and are within the last $2n-1$ views of the epoch if in epoch $e-1$).

There are now two further sub-cases to consider, within the second case above: 
 
\vspace{0.1cm} 
\noindent Case 1. At least $f+1$ of the honest processors wait for a heavy epoch synchronization to enter epoch $e$.  
Then those processors produce a TC for view $V(e)$ by $t+2\Delta +(2n-1)\Gamma$. If any honest processor has entered an epoch $>e$ by this time, then the claim of the lemma follows from Lemma \ref{fin0}. So, suppose otherwise.  Then all honest processors have sent an $\mathtt{epoch\ view}\ V(e)$ message by time $t+3\Delta +(2n-1)\Gamma$. All honest processors therefore see an EC for view $V(e)$ by time $t+2n\Gamma$. 
 
\vspace{0.1cm} 
\noindent Case 2. At least $f+1$ honest processors treat view $V(e)$ as a standard initial view (meaning that they enter the view when their local clock reaches $\mathtt{c}_{V(e)}$).
If any honest processor enters epoch $e+1$ by time $t+12n\Gamma$, then the claim follows from Lemma \ref{fin0}. 
Otherwise, all honest processors see a TC for view $V(e+1)$ and enter epoch $e$ by time $t+12n\Gamma$.
 \end{proof}

\begin{lemma} \label{infepochs}
If an honest processor is in epoch $e$ at $t\geq$ GST, then all honest processors are in epochs $\geq e+1$ within time $t+O(n\Delta)$. 
\end{lemma}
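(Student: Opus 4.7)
The plan is to bootstrap from Lemma~\ref{catchup}: first argue that some honest processor enters epoch $e+1$ within $O(n\Delta)$ of $t$, then apply Lemma~\ref{catchup} once more to propagate this to all honest processors. By Lemma~\ref{catchup}, there is a time $t_1 := t + O(n\Delta)$ by which every honest processor sits in some epoch $\geq e$. If any of them is already in an epoch $\geq e+1$ at $t_1$, a second application of Lemma~\ref{catchup} immediately yields the desired conclusion, so I would focus on the case in which every honest processor is in epoch exactly $e$ at $t_1$.

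The next step is to bound how long an honest processor can remain in epoch $e$ after $t_1$. By Lemma~\ref{viewint}, any honest $p$ in epoch $e$ has $\mathtt{lc}(p) \leq \mathtt{c}_{V(e+1)}$: the last view in epoch $e$ is $V(e+1)-1$, and Lemma~\ref{viewint} bounds the clock in any initial view $v_0$ or view $v_0+1$ by $\mathtt{c}_{v_0+2}$. The only clock-pause that can still occur for $p$ while in epoch $e$ is the one at $\mathtt{c}_{V(e+1)}$, since the pause at $\mathtt{c}_{V(e)}$ is already in $p$'s past. Hence, within at most $10n\Gamma = O(n\Delta)$ real time after $t_1$, each honest $p$ either has entered some epoch $\geq e+1$ via a QC, VC, or EC for a view $\geq V(e+1)$ (in which case we are done), or has $\mathtt{lc}(p) = \mathtt{c}_{V(e+1)}$.

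At the moment $p$'s clock first reaches $\mathtt{c}_{V(e+1)}$ I would split on the success flag. If $\mathtt{success}(e) = 1$ at that time, line~\ref{viewalt0} fires and $p$ enters epoch $e+1$ directly. Otherwise, $p$ pauses its clock and, $\Delta$ later, sends an $\mathtt{epoch\ view}\ V(e+1)$ message to all processors. If no honest processor ever sees the success criterion for epoch $e$, then every one of the $2f+1$ honest processors sends an $\mathtt{epoch\ view}\ V(e+1)$ message within $O(n\Delta)$ of $t_1$, an EC for $V(e+1)$ forms within another $\Delta$, and every honest processor enters epoch $e+1$. In either sub-case, some honest processor is in epoch $e+1$ within $t + O(n\Delta)$, and a final invocation of Lemma~\ref{catchup} lifts this to every honest processor.

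The main obstacle I anticipate is the asymmetric sub-case in which some honest processors see $\mathtt{success}(e) = 1$ (and so enter $e+1$ directly) while others do not (and rely on EC formation to catch up). To handle this I would appeal to the TC-handler at line~\ref{seef+1}, which forces any processor in an epoch $\leq e+1$ that sees a TC for $V(e+1)$ to send its own $\mathtt{epoch\ view}\ V(e+1)$ message: thus, as soon as $f+1$ non-success-seers have broadcast, the resulting TC coerces the success-seers into broadcasting as well, producing an EC that pulls any straggler into $e+1$. Since merely one honest processor entering $e+1$ suffices to invoke Lemma~\ref{catchup}, the rest of the argument is routine.
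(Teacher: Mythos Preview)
Your proposal is correct and follows essentially the same approach as the paper. The paper's proof is considerably more terse: it invokes Lemma~\ref{catchup} to place all honest processors in epoch $\geq e$, then argues by contradiction that if no honest processor enters epoch $e+1$ within $O(n\Delta)$, every honest processor must send an $\mathtt{epoch\ view}\ V(e+1)$ message (forming an EC and hence a contradiction); your forward case analysis simply unpacks what the paper compresses into that single contradiction step, including the asymmetric success-criterion sub-case that the paper's contradiction hypothesis rules out implicitly.
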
 
\begin{proof} 
By Lemma \ref{catchup}, all honest processors are in an epoch $\geq e$ by time $t+O(n\Delta)$, and it suffices to show that some honest processor enters epoch $e+1$ by time $t+O(n\Delta)$. So suppose otherwise. Then all honest processors send an $\mathtt{epoch\ view}\ V(e+1)$ message within time $t+O(n\Delta)$, which gives the required contradiction. 
\end{proof} 

\noindent 
\noindent \textbf{Specifying when epochs begin and end}. The clock time $\mathtt{c}_v$ is used to determine when $p$ should enter view $v$, according to its local clock. However, we also wish to specify a \emph{global} time corresponding to each view: $\mathtt{vt}_{v}$ is defined to be the least time at which at least $f+1$ honest processors are in views $\geq v$. Lemma \ref{infepochs}  shows that $\mathtt{vt}_{v}$ is defined for all $v$. Epoch $e$ begins at time $\mathtt{start}_e:=\mathtt{vt}_{V(e)}$ and ends at time $\mathtt{end}_e$, which is defined to be the first time $t$ at which $\mathtt{lc}(p_{f+1,t})\geq \mathtt{c}_{V(e+1)}$.

From the definitions and by Lemma \ref{viewint}, it is immediate that $\mathtt{start}_{e+1}\geq \mathtt{end}_e\geq \mathtt{start}_{e}$ for all $e\geq -1$. Lemma \ref{enterbefore} shows that  $\mathtt{start}_{e+1}>\mathtt{start}_e$. 

\vspace{0.2cm} 
\noindent \textbf{Further terminology: timely starts}. Suppose $v$ is initial and set $t:=\mathtt{vt}_{v}$.  We say that $v$ has a \emph{timely start} if the first honest processor to enter any view $\geq v$ does so after GST and   if $\mathtt{hg}_{f+1,t}\leq \Gamma +2\Delta$.  We say that epoch $e\geq 0$ has a timely start if view $V(e)$ has a timely start. 

\begin{lemma} \label{produce} 
Suppose $v\geq 0$ is initial and $p:=\mathtt{lead}(v)$ is honest. If view $v$ has a timely start then, for $t:=\mathtt{vt}_{v}$: 
\begin{itemize} 
\item All honest processors receive a QC for view $v$ by time $t+\Gamma/2$. 
\item All honest processors receive a QC for view $v+1$ by time $t+\Gamma-2\Delta$. 
\end{itemize} 
\end{lemma}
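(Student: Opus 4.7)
The plan is to cascade two rounds of QC production, exploiting the leader-assignment rule (which pairs $v$ and $v+1$ so that $\mathtt{lead}(v) = \mathtt{lead}(v+1) = p$). First I would set up the VC for $v$: by definition of $\mathtt{vt}_v$, at time $t$ at least $f+1$ honest processors are in views $\geq v$, and the instructions --- directly upon $\mathtt{lc}(p) == \mathtt{c}_v$, or retroactively via the ``for each initial view $v'$ with $view(p)\leq v'<v$'' clauses accompanying VC, QC, and TC receipt --- ensure each has sent $p$ a $\mathtt{view}\ v$ message by time $t$, unless a QC for $v$ already exists (in which case the first bullet is immediate by dissemination within $\Delta$, using $t\geq$ GST). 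These $f+1$ messages reach $p$ by $t+\Delta$, $p$ forms a VC, and all honest processors receive it by $t+2\Delta$, bumping any clocks below $\mathtt{c}_v$ up to $\mathtt{c}_v$.

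Next I would combine $\mathtt{hg}_{f+1,t} \leq \Gamma + 2\Delta$ with $\Gamma + 2\Delta < 2\Gamma = \mathtt{c}_{v+2}-\mathtt{c}_v$ and Lemma \ref{viewint} to rule out honest processors holding local clocks $\geq \mathtt{c}_{v+2}$ at time $t$, and hence in view $\geq v+2$. If some honest processor is already in view $v+1$ at $t$ then a QC for $v$ exists and is disseminated within $\Delta$, giving the first bullet; otherwise all $2f+1$ honest processors are in view exactly $v$ by $t+2\Delta$ and remain there (by Lemma \ref{viewint}) until their clocks reach $\mathtt{c}_{v+2}$, a window of $2\Gamma$ further real time. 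Applying $(\diamond_1)$ from $t+2\Delta$ gives a QC for view $v$ received by all by $t+2\Delta+x\delta \leq t+(x+2)\Delta = t+\Gamma/2$; this is consistent with the leader's stipulation because $\Gamma/2 - 2\Delta = x\Delta$ and the VC was sent by $t+\Delta$.

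For the second bullet, each honest processor receiving the QC for $v$ bumps $\mathtt{lc}(p)$ to $\mathtt{c}_{v+1}$ and enters view $v+1$ via line \ref{viewalt5} (since $v+1$ is non-initial); thus all $2f+1$ honest processors are in view $v+1$ by $t+\Gamma/2$ and cannot leave until their clocks reach $\mathtt{c}_{v+2}$, a further $\Gamma$ real time away. A second application of $(\diamond_1)$, combined with the leader's stipulation forcing $p$ to produce the QC for $v+1$ within $\Gamma/2 - 2\Delta$ of sending the QC for $v$ (itself sent by $t+\Gamma/2-\Delta$), yields the QC for $v+1$ produced by $t+\Gamma-3\Delta$ and received by all by $t+\Gamma-2\Delta$. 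The main obstacle is the middle-paragraph case analysis --- using the $\mathtt{hg}_{f+1,t}$ bound and the retroactive view-message mechanism to show that at $t+2\Delta$ either the QC for $v$ is already essentially universally known, or all $2f+1$ honest processors are cleanly aligned in view $v$. Everything after that reduces to a routine two-step cascade of $(\diamond_1)$ with the leader's $\Gamma/2 - 2\Delta$ stipulation.
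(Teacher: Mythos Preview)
Your proposal is correct and follows essentially the same approach as the paper's own proof: use the $f+1$ $\mathtt{view}\ v$ messages to get a VC disseminated by $t+2\Delta$, argue all honest processors are then in view $v$ (unless a QC for $v$ already exists), apply $(\diamond_1)$ to obtain the QC for $v$ by $t+\Gamma/2$, cascade into view $v+1$, and apply $(\diamond_1)$ again for the second bullet. Your write-up is in fact more explicit than the paper's---you invoke the retroactive $\mathtt{view}\ v'$ sends, the $\mathtt{hg}_{f+1,t}$ bound to exclude views $\geq v+2$, and check the $\Gamma/2-2\Delta$ leader stipulation---whereas the paper compresses all of this into ``by our choice of $\Gamma$''; the one implicit step you share with the paper is taking $\mathtt{lc}(p_{f+1,t})\approx\mathtt{c}_v$ when converting the honest-gap bound into a bound on $\mathtt{lc}(p_{1,t})$.
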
 
\begin{proof} 
Note that no honest processor enters view $v+1$ before seeing a QC for view $v$. 
Since $p$ must receive $\mathtt{view}\ v$ messages from at least $f+1$ distinct processors by time $t+\Delta$, all honest processors see a VC for view $v$ by time $t+2\Delta$. Honest processors will therefore be in view $v$ by this time, unless a QC has already been produced for view $v$. By our choice of $\Gamma$, it follows that all honest processors will receive a QC for view $v$ by time $t+\Gamma/2$, as claimed, and will enter view $v+1$ by this time. By the choice of $\Gamma$, all honest processors will then see a QC for view $v+1$ by time $t+\Gamma-2\Delta$, as claimed. 
\end{proof} 

\noindent \textbf{Further terminology: primary clock bumps}.  If honest $p$ bumps its clock forward at $t$ and if $p=p_{1,t}$ after doing so, with $\mathtt{lc}(p)$ strictly greater than the value of any honest clock at any time $<t$, then we say that a \emph{primary clock bump} occurs at $t$, and that $p$ implements a primary clock bump at $t$.

\begin{lemma} \label{nodecrease} The two following statements hold: 
\begin{enumerate} 
\item If a primary clock bump occurs at $t$, then $\mathtt{hg}_{f+1,t}\leq \Gamma$. 
\item The honest  gap $\mathtt{hg}_{f+1,t}$ does not increase during epoch $e$, except to a value below $\Gamma$, i.e.\ if $t,t'\in [\mathtt{start}_e,\mathtt{end}_e]$ and $t<t'$, then either $\mathtt{hg}_{f+1,t}\geq \mathtt{hg}_{f+1,t'}$, or else $\mathtt{hg}_{f+1,t'}\leq \Gamma$. 
\end{enumerate} 
\end{lemma}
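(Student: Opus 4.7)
For Part (1), the plan is to enumerate the three algorithmic lines that can strictly advance an honest processor's local clock — line \ref{bump1} (triggered by a TC), line \ref{bump2} (triggered by a VC), and line \ref{bump3} (triggered by a QC) — noting that an EC bump reduces to the TC case, since any EC contains an embedded TC. For bumps triggered by a TC or a VC to some value $\mathtt{c}_v$, the certificate carries $f+1$ signatures and so has at least one honest signer $p'$. A careful inspection of the code shows that every honest signer of a $\mathtt{view}\ v$ or $\mathtt{epoch\ view}\ v$ message satisfies $\mathtt{lc}(p',s) \geq \mathtt{c}_v$ at its signing time $s<t$ (either by having naturally reached $\mathtt{c}_v$ at the ``Upon $\mathtt{lc}(p)==\mathtt{c}_v$'' trigger, or by having retroactively bumped past $\mathtt{c}_v$ before sending). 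Also $p \neq p'$, since if $p$ had signed then $\mathtt{lc}(p) \geq \mathtt{c}_v$ already, failing the guard $\mathtt{lc}(p) < \mathtt{c}_v$ at the bump line. Hence $p$'s post-bump value $\mathtt{c}_v$ does not strictly exceed some earlier honest clock value, and such a bump cannot be primary.

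The remaining case is the QC-triggered bump at line \ref{bump3}, where $p$ moves its clock to $\mathtt{c}_{v+1}$ on seeing a QC for view $v$. By assumption $(\diamond_2)$ on the underlying protocol together with Lemma \ref{viewint}, the $2f+1$-signature QC includes at least $f+1$ honest signers, each of whom had local clock $\geq \mathtt{c}_v$ while voting in view $v$. Monotonicity of honest clocks preserves this bound at time $t$, so $\mathtt{lc}(p_{f+1,t},t) \geq \mathtt{c}_v$. Primariness also gives $\mathtt{lc}(p_{1,t},t)=\mathtt{lc}(p,t)=\mathtt{c}_{v+1}$, whence $\mathtt{hg}_{f+1,t} \leq \mathtt{c}_{v+1}-\mathtt{c}_v = \Gamma$, as required.

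For Part (2), I write $M(s) := \mathtt{lc}(p_{1,s},s)$ and $L(s) := \mathtt{lc}(p_{f+1,s},s)$; both are order statistics of a family of non-decreasing honest clock values and hence are themselves non-decreasing in $s$. The strategy is to classify each event on $[t,t']$ that can alter $g(s):=M(s)-L(s)$ into (i) continuous real-time advance, (ii) a non-primary clock bump, or (iii) a primary clock bump. Non-primary bumps by definition push a clock to a value at most equal to some previously attained honest clock value, so they leave $M$ unchanged while only weakly increasing $L$; hence they weakly shrink $g$. A primary bump at time $s^*$ caps $g(s^*)\leq \Gamma$ by Part (1), and by monotonicity of $L$ and real-time advance of $M$ on any subsequent stall-free sub-interval, $g$ remains $\leq \Gamma$ thereafter. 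Pure continuous advance preserves $g$ provided both the max- and $(f+1)$-th-holder clocks progress at real time.

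The delicate step, and the main obstacle, is verifying that the $(f+1)$-th-holder's clock does indeed advance at real time throughout $[\mathtt{start}_e,\mathtt{end}_e)$. In this interval $L(s) \in [\mathtt{c}_{V(e)},\mathtt{c}_{V(e+1)})$, and the only algorithmic pause points are the epoch-view clock-times $\mathtt{c}_{V(e')}$. Since $L(s)<\mathtt{c}_{V(e+1)}$, the $L$-holder cannot be paused at $\mathtt{c}_{V(e+1)}$. At the lower boundary $L(s)=\mathtt{c}_{V(e)}$, the definition of $\mathtt{start}_e$ supplies $f+1$ honest processors in views $\geq V(e)$, each of which has clock $\geq \mathtt{c}_{V(e)}$ yet cannot be paused at $\mathtt{c}_{V(e)}$ (a processor paused at that value is still in a view $<V(e)$). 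The arbitrary tie-breaking permitted by Definition \ref{hgdef} lets us always select $p_{f+1,s}$ from among those unpaused candidates, and by monotonicity of both views and clocks such a candidate will remain unpaused within the epoch. Assembling the three cases gives $\mathtt{hg}_{f+1,t'} \leq \max(\mathtt{hg}_{f+1,t},\Gamma)$, which is precisely the statement of Part (2).
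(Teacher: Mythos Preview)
Your proof is correct and follows essentially the same approach as the paper's: Part (1) by ruling out VC/TC-triggered bumps as primary (via an honest contributor $p'\neq p$ whose clock was already $\geq \mathtt{c}_v$) and bounding the gap in the QC case using Lemma \ref{viewint}; Part (2) from Part (1) together with the observation that $p_{f+1,t}$ is never paused in $[\mathtt{start}_e,\mathtt{end}_e)$. Your treatment of Part (2) is considerably more detailed than the paper's one-sentence derivation, and you correctly isolate the tie-breaking subtlety at the boundary $L(s)=\mathtt{c}_{V(e)}$ that the paper leaves implicit.
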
 
\begin{proof} 
Statement (2) follows from (1) and the fact that there is no time  $t\in [\mathtt{start}_e,\mathtt{end}_e)$ at which the local clock of processor $p_{f+1,t}$ is paused. 

To establish (1), suppose that $p$ implements a primary clock bump at $t$.  It cannot be that $p$ bumps its clock upon seeing a VC for some view $v$ (in an epoch $\geq e$), because the existence of the VC implies that some honest processor $p'\neq p$ already contributed to that VC, meaning that  $\mathtt{lc}(p)$ is not strictly greater than the value of any honest clock at any time $<t$. By the same argument, it cannot be that $p$ bumps its clock upon seeing a TC. So, $p$ must bump its local clock because it sees a QC for some view $v$ at $t$. Then at least $f+1$ honest processors must have contributed to that QC. By Lemma \ref{viewint}, each of those processors $p'$ must satisfy $\mathtt{lc}(p',t)\geq \mathtt{c}_v$, meaning that $\mathtt{hg}_{f+1,t}\leq \Gamma$, as required. 
\end{proof}

\begin{lemma} \label{2Delta}
For $e\geq 0$, suppose the  first honest processor to enter epoch $e$ does so at time $t\geq$ GST upon seeing an EC for view $v:=V(e)$. Then epoch $e$ has a timely start.
\end{lemma}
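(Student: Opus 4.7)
The plan is to verify the two defining conditions of a timely start for view $v := V(e)$: first, that the first honest processor to enter any view $\geq v$ does so after GST, and second, that $\mathtt{hg}_{f+1, \mathtt{vt}_v} \leq \Gamma + 2\Delta$. The first condition follows quickly from the hypothesis: by Lemma~\ref{view/epoch}, any honest processor in view $\geq V(e)$ is in epoch $\geq e$, and a short induction via Lemma~\ref{enterbefore} shows that the first honest processor to be in any epoch $\geq e$ must enter epoch exactly $e$ (since being in $e' > e$ would first require $f+1$ honest processors to have entered $e' - 1 \geq e$). So this first entry coincides with $p$'s entry at $t \geq \text{GST}$.

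Next I would bound $\mathtt{vt}_v \leq t + 2\Delta$. The EC that $p$ sees at $t$ contains $2f+1$ $\mathtt{epoch\ view}\ v$ messages, at least $f+1$ of which come from honest processors and were sent by time $t$. Since $t \geq \text{GST}$, these $f+1$ messages reach every honest processor by $t + \Delta$, forming a TC at each. Upon seeing the TC, any honest processor that has not already sent an $\mathtt{epoch\ view}\ v$ message does so, and these arrive by $t + 2\Delta$. Hence every honest processor has assembled an EC and entered view $v$ (and epoch $e$) by $t + 2\Delta$, giving $\mathtt{vt}_v \leq t + 2\Delta$.

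The heart of the argument is the subclaim that throughout $[t, \mathtt{vt}_v]$ every honest clock remains $\leq \mathtt{c}_v + 2\Delta$. Combined with the observation that at $\mathtt{vt}_v$ the $f+1$ honest processors in view $\geq v$ all have clocks $\geq \mathtt{c}_v$ by Lemma~\ref{viewint}, this yields $\mathtt{hg}_{f+1, \mathtt{vt}_v} \leq 2\Delta \leq \Gamma + 2\Delta$. Before $t$ all honest clocks are $\leq \mathtt{c}_v$: by the first condition no honest processor is then in view $\geq v$, and Lemma~\ref{viewint} caps the clocks of lower-view processors at $\mathtt{c}_v$. Natural advance in the at-most-$2\Delta$ window $[t, \mathtt{vt}_v]$ contributes at most $2\Delta$, and bumps from ECs or TCs for $v$, or from QCs for views $< v$, cannot lift a clock above $\mathtt{c}_v$.

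The main obstacle is ruling out bumps from a QC for view $\geq v$ or from any VC/EC/TC for view $> v$ in the window $[t, \mathtt{vt}_v)$. I would handle this with a minimal-counterexample argument, defining a \emph{critical event} as either an honest processor entering a view $> v$ or the first appearance of a QC for $v$ or of any certificate for a view $> v$. Suppose towards a contradiction that the first critical event occurs at some $t^* < \mathtt{vt}_v$. If it is a QC for $v' \geq v$, then $f+1$ honest processors must have signed while in view $v' \geq v$, which either places $\mathtt{vt}_v$ no later than $t^*$ (for $v'=v$) or constitutes an earlier critical event (for $v' > v$); a VC, EC, or TC for $v' > v$ would likewise require at least one honest processor to already be in view $v' > v$, or a clock to have reached $\mathtt{c}_{v'} \geq \mathtt{c}_{v+1} = \mathtt{c}_v + \Gamma > \mathtt{c}_v + 2\Delta$ (using $\Gamma = 2(x+2)\Delta > 2\Delta$), which is impossible in the short window. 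Finally, an honest processor first entering a view $v' > v$ must be triggered by a QC/VC/EC for view $\geq v$ or by its clock reaching $\mathtt{c}_{v'}$, each option reducing to a still earlier critical event or an impossibility. With this bootstrapping step in place, the bound on $\mathtt{hg}_{f+1, \mathtt{vt}_v}$, and hence the timely start of epoch $e$, follow immediately.
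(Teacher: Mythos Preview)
Your overall strategy is sound, but there is a circularity between your step~2 and your step~5 that needs to be untangled. In step~2 you claim that every honest processor responds to the TC and therefore enters view $v$ by $t+2\Delta$; this is only true provided no honest processor has already moved to epoch $>e$ in the interval $[t,t+\Delta]$ (line~\ref{seef+1} requires $E(v)\geq epoch(p)$). You defer ruling that out to the critical-event argument in step~5, but step~5 in turn invokes ``the short window'' $\mathtt{vt}_v-t\leq 2\Delta$, which is precisely the conclusion of step~2. As written, the two steps lean on each other.

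The fix is to fold the window bound into the critical-event argument itself. Before the first critical event $t^*$, no honest processor is in a view $>v$ (hence none is in epoch $>e$), so all honest processors do respond to the TC and assemble an EC by $t+2\Delta$; thus $\mathtt{vt}_v\leq \min(t+2\Delta,t^*)$. Now if $t^*<\mathtt{vt}_v$ you already have a contradiction, and if $t^*\geq \mathtt{vt}_v$ you get $\mathtt{vt}_v\leq t+2\Delta$ for free, after which your clock bound goes through. Alternatively, observe directly from Lemma~\ref{enterbefore} (chained) that any honest entry into an epoch $>e$ forces $f+1$ honest processors to have first entered epoch $e$, hence to be in views $\geq v$; so no honest processor can be in epoch $>e$ strictly before $\mathtt{vt}_v$, which immediately validates step~2.

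For comparison, the paper takes a shorter route: it does an explicit case split on whether some honest processor reaches epoch $>e$ by $t+\Delta$. If not, the EC-dissemination argument (your step~2) gives $\mathtt{hg}_{f+1,\mathtt{start}_e}\leq 2\Delta$ directly. If so, the paper argues that a QC for view $v$ must already exist, forcing $\mathtt{vt}_v$ to occur even earlier with gap $\leq\Delta$. Your unified critical-event argument is more explicit about why large clock bumps cannot occur in the window (a point the paper leaves somewhat implicit), but the case split avoids the bootstrapping you need.
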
 
\begin{proof} 
 Since $t\geq $GST and at least $f+1$ honest processors must have contributed to the EC,  all honest processors see a TC for view $v$ by time $t+\Delta$. 
Upon seeing a TC for view $v$ by time $t+\Delta$, any honest processor will send an  $\mathtt{epoch\  view}\ v$ message, unless they have already done so, or unless they are already in an epoch $>e$ by time $t+\Delta$. If any honest processor is in an epoch $>e$ by time $t+\Delta$, then all views in epoch $e$ must produce QCs. In particular, a QC for view $v$ must be produced before time $t+\Delta$, meaning that at least $f+1$ honest processors enter view $v$ before this QC is produced and view $v$ has a timely start (in fact $\mathtt{hg}_{f+1,\mathtt{start}_e}\leq  \Delta$). If no honest processor is in an epoch $>e$ by time $t+\Delta$, then all honest processors send $\mathtt{epoch\  view}\ v$ messages by that time, and all honest processors see an EC for view $v$ by time $t+2\Delta$. Again, we conclude that view $v$ has a timely start (in fact  $\mathtt{hg}_{f+1,\mathtt{start}_e}\leq  2\Delta$ in that case). 
\end{proof}

\begin{lemma} \label{largebound} 
Suppose the  first honest processor to enter epoch $e\geq 0$ does so after GST. Then $\mathtt{hg}_{f+1,\mathtt{start}_e}< (4f+2)\Gamma$.  
\end{lemma}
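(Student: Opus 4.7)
My plan is to bound the maximum honest clock at $\mathtt{start}_e$ in terms of $\mathtt{start}_e - t^*$, where $t^*$ is the time at which the first honest processor enters view $V(e)$, and then to bound $\mathtt{start}_e - t^*$ by analysing how quickly further honest processors can catch up. First I would observe that strictly before $\mathtt{start}_e$ no QC for any view $v \geq V(e)$ can have been produced, since such a QC would require $f+1$ honest processors to have voted in view $v \geq V(e)$, contradicting the definition of $\mathtt{start}_e$. Hence the first honest entry into $V(e)$, at time $t^* \geq \text{GST}$ by hypothesis, must use either the EC path (line~\ref{viewalt2}) or the success path (line~\ref{viewalt0}); in both cases the entering processor $p$ has $\mathtt{lc}(p) = \mathtt{c}_{V(e)}$ at $t^*$ (for the EC path, an EC for $V(e)$ contains a TC for $V(e)$, so the bump of line~\ref{bump1} fires first, exactly as used in the proof of Lemma~\ref{viewint}). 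After $t^*$, $p$'s clock advances at real time through the successive initial views of epoch $e$ with no further pause before $V(e+1)$, so at $\mathtt{start}_e$ it is at most $\mathtt{c}_{V(e)} + (\mathtt{start}_e - t^*)$. Any other honest processor either still has view $<V(e)$ (and thus clock at most $\mathtt{c}_{V(e)}$ by Lemma~\ref{viewint}) or entered $V(e)$ no earlier than $p$, so this bound also bounds the maximum honest clock. Combined with $\mathtt{lc}(p_{f+1,\mathtt{start}_e}) \geq \mathtt{c}_{V(e)}$, it therefore suffices to show $\mathtt{start}_e - t^* < (4f+2)\Gamma$.

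Next I would case-split on the mechanism used at $t^*$. In the EC case, the EC for $V(e)$ has at least $f+1$ honest contributors, all of which receive it within $\Delta$ of $t^* \geq \text{GST}$ and then enter $V(e)$, giving $\mathtt{start}_e \leq t^* + \Delta$, well within the claimed bound. In the success case, $p$ has seen $2f+1$ distinct processors each produce $10$ QCs in epoch $e-1$; using the permutation-based leader rotation of Section~\ref{formal}, in which each processor leads two consecutive views per $2n$-view block with $5$ blocks per epoch, the $f$ non-producing processors' 2-view slots in the last block occupy at most the final $2f$ views of epoch $e-1$, so the latest view $v_{\text{last}}$ of epoch $e-1$ carrying a QC satisfies $v_{\text{last}} \geq V(e) - (2f+1)$. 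That QC was produced with the help of at least $f+1$ honest voters, each of whom had clock $\geq \mathtt{c}_{v_{\text{last}}}$ at voting time $t_v \leq t^*$; each such voter's clock therefore reaches $\mathtt{c}_{V(e)}$ by $t_v + (2f+1)\Gamma \leq t^* + (2f+1)\Gamma$. Because $p$'s success-triggering QCs reach every honest processor within $\Delta$ of $t^*$ and $\Gamma \geq 2\Delta$, $\mathtt{success}(e-1)=1$ holds globally by $t^* + \Delta \leq t^* + (2f+1)\Gamma$, so none of these voters remains paused at $\mathtt{c}_{V(e)}$ beyond $t^* + (2f+1)\Gamma$, and each enters $V(e)$ by then. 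This yields $\mathtt{start}_e \leq t^* + (2f+1)\Gamma < (4f+2)\Gamma$, as required.

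The main obstacle I expect is the combinatorial last-block accounting that produces the bound $v_{\text{last}} \geq V(e) - (2f+1)$, which needs careful tracking of the leader permutation together with the two-consecutive-views-per-leader structure of each $2n$-view block. A secondary technical subtlety is correctly interleaving clock pausing and unpausing with the propagation of $\mathtt{success}(e-1) = 1$, so that the $f+1$ honest voters of $v_{\text{last}}$ are certified to actually enter $V(e)$ by the claimed deadline rather than remaining stuck paused at the epoch boundary $\mathtt{c}_{V(e)}$.
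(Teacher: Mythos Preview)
Your reduction to bounding $\mathtt{start}_e - t^*$ is sound and mirrors the paper's strategy: the observation that no QC for a view $\geq V(e)$ can exist strictly before $\mathtt{start}_e$, combined with the fact (established in the proof of Lemma~\ref{nodecrease}(1)) that only QCs can cause primary clock bumps, indeed yields $\mathtt{hg}_{f+1,\mathtt{start}_e} \leq \mathtt{start}_e - t^*$. Your phrasing in terms of $p$'s individual clock ``advancing at real time'' is imprecise, though: $p$'s clock could be bumped by a VC. What is actually true is that the \emph{maximum} honest clock cannot exceed $\mathtt{c}_{V(e)} + (s - t^*)$ at any $s \in [t^*,\mathtt{start}_e]$, for the reason just stated.

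The real gap is in your success case. You assert that ``$p$'s success-triggering QCs reach every honest processor within $\Delta$ of $t^*$'' and hence that $\mathtt{success}(e-1)=1$ holds globally by $t^*+\Delta$. This is false in the model: QCs produced by Byzantine leaders need not be sent to all processors, and the protocol does not have $p$ relay them. Section~\ref{SS} is explicit about this, and the entire EC/TC machinery for epoch entry exists precisely because honest processors may disagree on whether the success criterion is satisfied. Consequently your $f+1$ honest voters for $v_{\text{last}}$ may pause at $\mathtt{c}_{V(e)}$ with $\mathtt{success}(e-1)=0$, and with only $f+1$ such processors you cannot run a case split (some see success, the rest form a TC) that guarantees $f+1$ enter $V(e)$ by the deadline.

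The paper repairs this by insisting on a QC from an \emph{honest} leader in the wider window $[V(e)-2(2f+1),V(e))$: among the last $2f+1$ leader slots of the epoch, at most $2f$ can be Byzantine or non-producers, so some slot is an honest producer. An honest leader sends its QC to \emph{all} processors, so by $t+\Delta$ every honest clock is at least $\mathtt{c}_{V(e)}-(4f+1)\Gamma$, and all $2f+1$ honest clocks reach $\mathtt{c}_{V(e)}$ by $t_1:=t+\Delta+(4f+1)\Gamma$. With every honest processor at the boundary, the case split now works: either at least $f+1$ see success and enter, or at least $f+1$ pause and produce a TC (and then an EC) within a few further $\Delta$'s; either way $\mathtt{start}_e - t < (4f+2)\Gamma$. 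Your tighter window $[V(e)-(2f+1),V(e))$ buys nothing, because what matters is bringing \emph{all} honest clocks to the epoch boundary, not just $f+1$ of them.
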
 
\begin{proof} 
Suppose first that the first honest processor to enter epoch $e$ does so upon seeing an EC for view $v:=V(e)$. Then the claim follows from Lemma \ref{2Delta}. 

If the first case does not apply, then the first processor $p$ to enter epoch $e$ must do so  at some time $t$, say, upon seeing $\mathtt{success}(e-1)=1$. Note that: 

\begin{enumerate} 
\item[$(\dagger)$:] If there exists a least $t'>t$ with $t'\leq \mathtt{end}_e$ at which a primary clock bump occurs, then some view in epoch $e$ must already have produced a QC, meaning that $\mathtt{start}_e<t'$ and $\mathtt{hg}_{f+1,\mathtt{start}_e}<t'-t$. 
\end{enumerate}. 

 From the fact that $p$ sees epoch $e-1$ satisfy the success criterion at $t$, it follows that there exists some view in the range  $[V(e)-2(2f+1),V(e))$ which has honest leader and which has produced a QC by time $t$. Since all honest processors see this QC by time $t+\Delta$, all honest processors have local clock values at least $\mathtt{c}_{V(e)}-(4f+1)\Gamma$ by $t+\Delta$, and have clock values at least $\mathtt{c}_{V(e)}$ by time $t_1:=t+ \Delta + (4f+1)\Gamma$. 
 
 If a primary clock bump occurs in the interval $(t,t+(4f+2)\Gamma]$, then the lemma follows from $(\dagger)$ above. So suppose otherwise, and consider two cases: 
 
 \vspace{0.1cm} 
 \noindent \textbf{Case 1}: At least $f+1$ honest processors send $\mathtt{epoch \ view}\  V(e)$ messages by time $t_1+\Delta$. In this case, all honest processors send  $\mathtt{epoch \ view}\  V(e)$ 
 messages by time $t_1+2\Delta$ and enter epoch $e$ by time $t_1+3\Delta$. Since $\Gamma >3\Delta$,  the claim of the lemma follows in that case. 
 
  \vspace{0.1cm} 
 \noindent \textbf{Case 2}: Otherwise. Then at least $f+1$ honest processors must see epoch $e-1$ satisfy the success criterion and enter epoch $e$ by time $t_1+\Delta$. The claim of the lemma therefore follows similarly in that case. 
\end{proof} 

\begin{lemma} \label{shorten} 
Suppose the  first honest processor to enter epoch $e\geq 0$ does so after GST.   Suppose further that initial view $v$ is amongst the first $8n$ views of epoch $e$ and that $p:=\mathtt{lead}(v)$ is honest. If  QCs are produced for views $v$ and $v+1$ then either: 
\begin{enumerate} 
\item[(i)] $\mathtt{hg}_{f+1,\mathtt{vt}_{v'}}\leq \text{max}\{ \Gamma, \mathtt{hg}_{f+1,\mathtt{vt}_{v}}-\Gamma  \} $ for all initial views $v'>v$ in epoch $e$, or; 
\item[(ii)] The last initial view $v^*$ of the epoch has a timely start and $\mathtt{hg}_{f+1,\mathtt{vt}_{v^*}}\leq \Gamma$. 
\end{enumerate} 
\end{lemma}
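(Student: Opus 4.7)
The plan is to convert the honest-leader timing discipline into a concrete schedule for the two QCs, observe that this schedule forces the $(f+1)$-st honest clock up by $2\Gamma$ within an interval of length $\Gamma - 2\Delta$, case-split on whether a primary clock bump has meanwhile driven the maximum honest clock up, and then feed both halves of Lemma \ref{nodecrease} into the rest of the epoch. Concretely, set $t_1 := \mathtt{vt}_v$ and $G := \mathtt{hg}_{f+1, t_1}$. At $t_1$ at least $f+1$ honest processors are in views $\geq v$ and have each sent (or promptly send via the retroactive $\mathtt{view}\ v'$ clause) a $\mathtt{view}\ v$ message to $p$, so the VC for $v$ is emitted by $t_1 + \Delta$. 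The honest-leader timing rule then forces $p$ to produce the QC for $v$ by $t_1 + \Gamma/2 - \Delta$ and, reapplied for the non-initial view $v+1$ which has the same leader, the QC for $v+1$ by $t_1 + \Gamma - 3\Delta$. With $t_3 := t_1 + \Gamma - 2\Delta$, every honest processor sees both QCs by $t_3$ and bumps its local clock to at least $\mathtt{c}_{v+2}$; in particular $\mathtt{lc}(p_{f+1, t_3}, t_3) \geq \mathtt{c}_{v+2}$.

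I then case-split on whether a primary clock bump occurs in $(t_1, t_3]$. If one does, at some time $t'$, Lemma \ref{nodecrease}(1) gives $\mathtt{hg}_{f+1, t'} \leq \Gamma$, and Lemma \ref{nodecrease}(2) then forces $\mathtt{hg}_{f+1, \cdot} \leq \Gamma$ on the remainder of the epoch. If none occurs, the maximum honest clock advances only in real time over $[t_1, t_3]$, so $M_{t_3} \leq M_{t_1} + \Gamma - 2\Delta$; combined with $\mathtt{lc}(p_{f+1, t_3}) \geq \mathtt{c}_{v+2} \geq \mathtt{lc}(p_{f+1, t_1}) + 2\Gamma$ (using $\mathtt{lc}(p_{f+1, t_1}) \geq \mathtt{c}_v$) this gives $\mathtt{hg}_{f+1, t_3} \leq G - \Gamma - 2\Delta$. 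Applying Lemma \ref{nodecrease}(2) once more then bounds $\mathtt{hg}_{f+1, t} \leq \max\{\Gamma, G - \Gamma\}$ for all $t \in [t_3, \mathtt{end}_e]$, yielding conclusion (i) for every initial $v' > v$ in epoch $e$ with $\mathtt{vt}_{v'} \leq \mathtt{end}_e$.

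The main remaining obstacle, which is what the disjunction in the lemma is designed to absorb, is that $\mathtt{vt}_{v'}$ for some initial $v' > v$ in epoch $e$ may strictly exceed $\mathtt{end}_e$, so Lemma \ref{nodecrease}(2) does not directly apply at $\mathtt{vt}_{v'}$. I would handle this by focusing on the last initial view $v^*$ of the epoch: the preceding analysis yields $\mathtt{hg}_{f+1, \mathtt{vt}_{v^*}} \leq \Gamma$ (the $G - \Gamma$ case collapses into the $\Gamma$ case once the gap has dropped below $\Gamma$), and the hypothesis that the first honest processor to enter epoch $e$ does so after GST gives $\mathtt{vt}_{v^*} \geq \mathtt{vt}_v > \text{GST}$, so together these establish a timely start for $v^*$ together with the gap bound, yielding conclusion (ii).
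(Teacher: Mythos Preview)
Your overall strategy---pinning down the VC/QC timing schedule, case-splitting on whether a primary clock bump occurs, and then feeding Lemma~\ref{nodecrease} through the rest of the epoch---matches the paper's. However, there is a genuine gap in the no-bump branch. You write
\[
\mathtt{lc}(p_{f+1,t_3})\ \geq\ \mathtt{c}_{v+2}\ \geq\ \mathtt{lc}(p_{f+1,t_1}) + 2\Gamma \qquad\text{(using } \mathtt{lc}(p_{f+1,t_1})\geq \mathtt{c}_v\text{)},
\]
but the justification you cite points the wrong way: from $\mathtt{lc}(p_{f+1,t_1})\geq \mathtt{c}_v$ one obtains $\mathtt{lc}(p_{f+1,t_1})+2\Gamma\geq \mathtt{c}_{v+2}$, the reverse of what you need. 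The inequality your computation actually requires is $\mathtt{lc}(p_{f+1,t_1})\leq \mathtt{c}_v$ (in fact equality), and this is not automatic from the definition of $\mathtt{vt}_v$: the $(f{+}1)$-st honest processor could in principle jump straight past view $v$ via a VC for a higher view. The paper closes this by invoking assumption $(\diamond_2)$: since view $v$ actually produces a QC, some interval must have $2f{+}1$ processors---hence $f{+}1$ honest ones---in view $v$ itself, which forces $\mathtt{lc}(p_{f+1,\mathtt{vt}_v})=\mathtt{c}_v$. Without that step your bound $\mathtt{hg}_{f+1,t_3}\leq G-\Gamma-2\Delta$ does not follow.

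There is also a structural mismatch in how you discharge the disjunction. In the paper, the primary-bump case is what yields conclusion (ii), and this is precisely where the hypotheses ``$v$ is amongst the first $8n$ views'' and Lemma~\ref{largebound} (the $(4f{+}2)\Gamma$ bound on the gap) are used; you invoke neither. Your final paragraph instead worries that $\mathtt{vt}_{v'}$ might exceed $\mathtt{end}_e$, but for $v'$ in epoch $e$ this cannot happen: at $\mathtt{end}_e$ the $(f{+}1)$-st honest clock has reached $\mathtt{c}_{V(e+1)}$, so by Lemma~\ref{viewint} at least $f{+}1$ honest processors are already in views $\geq v^*\geq v'$. And your attempted derivation of (ii) asserts that ``the $G-\Gamma$ case collapses into the $\Gamma$ case once the gap has dropped below $\Gamma$'', but nothing in your argument forces the gap below $\Gamma$ when $G>2\Gamma$; your own no-bump case delivers only $\max\{\Gamma,G-\Gamma\}$ at $\mathtt{vt}_{v^*}$.
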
 
\begin{proof} 
By Lemmas \ref{largebound} and \ref{nodecrease}, we know that   $\mathtt{hg}_{f+1,t}< (4f+2)\Gamma$ for all $t\in [\mathtt{start}_e,\mathtt{end}_e]$. Set $t:=\mathtt{vt}_v$. Since view $v$ produces a QC, it follows from $(\diamond_2)$ in Section \ref{modelsec} that $\mathtt{lc}(p_{f+1,t},t)=\mathtt{c}_v$, otherwise there does not exist any non-empty time interval during which at least $2f+1$ processors (some of them potentially Byzantine) can be in view $v$ and act honestly. 
Since view $v$ produces a QC,  $p$ must produce a VC for view $v$ by time $t+\Delta$: since $p$ receives $f+1$ $\mathtt{view}\ v$ messages by $t+\Delta$, it produces a VC by this time unless it is already in a greater view, but if $p$ enters a view greater than $v$ before producing a QC, then no QC can be produced for view $v$. 

There are now two cases to consider: 

\vspace{0.1cm} 
\noindent \textbf{Case 1}. A primary bump occurs at some first time $t'\in (t,t+\Gamma]$. In this case, it follows from Lemma \ref{nodecrease} that $\mathtt{hg}_{f+1,t'}\leq \Gamma$. Since  $\mathtt{hg}_{f+1,t}< (4f+2)\Gamma$ and $v$ is amongst the first $8n$ views of epoch $e$, it also follows from Lemma \ref{nodecrease} that the last initial view $v^*$ of the epoch has a timely start and $\mathtt{hg}_{f+1,\mathtt{vt}_{v^*}}\leq \Gamma$.

\vspace{0.1cm} 
\noindent \textbf{Case 2}. Case 1 does not occur. In this case, since $p$ produces QCs for views $v$ and $v+1$, all honest processors receive a QC for view $v+1$ by time $t+\Gamma$. At this point, all honest processors have local clock values at least $\mathtt{c}_{v+2}$.  If $v'\geq v+2$ and $\mathtt{vt}_{v'}\leq t+\Gamma$, it follows directly that  $\mathtt{hg}_{f+1,\mathtt{vt}_{v'}}\leq  \mathtt{hg}_{f+1,\mathtt{vt}_{v}}-\Gamma $. 
If  $v'\geq v+2$ and $\mathtt{vt}_{v'}\geq t+\Gamma$ then it follows from Lemma \ref{nodecrease} that $\mathtt{hg}_{f+1,\mathtt{vt}_{v'}}\leq \text{max}\{ \Gamma, \mathtt{hg}_{f+1,\mathtt{vt}_{v}}-\Gamma  \} $, as required. 
\end{proof} 

\begin{lemma} \label{lastleader} 
Suppose the  first honest processor to enter epoch $e\geq 0$ does so after GST.  If the last leader of epoch $e$ is honest, then epoch $e+1$ has a timely start. 
\end{lemma}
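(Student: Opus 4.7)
The plan is to case-split on whether any honest processor ever sets $\mathtt{success}(e) := 1$. In each case the goal is to show $\mathtt{hg}_{f+1,\mathtt{vt}_{V(e+1)}} \leq \Gamma + 2\Delta$; the ``first honest processor to enter a view $\geq V(e+1)$ does so after GST'' clause in the definition of timely start is automatic, since by Lemma \ref{enterbefore} any honest entry into epoch $e+1$ is preceded by some honest entry into epoch $e$, which by hypothesis occurs after GST.

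Case 1 (no honest processor ever sees $\mathtt{success}(e) = 1$): Then the success-based entry of line \ref{condy1} never fires for any honest processor, and I would argue that no VC for $V(e+1)$ can form and no QC for any view $\geq V(e+1)$ can exist until some honest processor is already in view $\geq V(e+1)$. Indeed $(\diamond_2)$ rules out a QC for a view $\geq V(e+1)$ without $f+1$ honest already in such a view, and a VC for $V(e+1)$ requires $f+1$ distinct $\mathtt{view}\ V(e+1)$ messages, of which at most $f$ can come from the Byzantine processors, while honest processors send such a message only upon entering view $V(e+1)$ or as a backfill triggered by a TC/VC/QC for a higher view (none of which yet exist). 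Hence the first honest processor to enter epoch $e+1$ does so upon receiving an EC for $V(e+1)$, and the proof of Lemma \ref{2Delta} applied to epoch $e+1$ then gives $\mathtt{hg}_{f+1,\mathtt{vt}_{V(e+1)}} \leq 2\Delta$.

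Case 2 (some honest sees $\mathtt{success}(e) = 1$): By the definition of the success criterion there are $2f+1$ distinct processors who each produced $10$ QCs in epoch $e$, and at most $f$ of these are Byzantine, so at least $f+1$ of them are honest leaders who each produced QCs for all $10$ of their views. Because the leader schedule gives each processor one pair of consecutive views per $2n$-view permutation window, and the first $8n$ views of epoch $e$ cover $4$ of the $5$ windows, each such honest leader contributes $4$ pair-slots in the first $8n$ views at which Lemma \ref{shorten} applies (with QCs for both the initial view $v$ and its paired non-initial view $v+1$). This gives $4(f+1) \geq 4f+1$ applications at distinct initial views. If any one lands in case (ii) of Lemma \ref{shorten}, then $v^* := V(e+1)-2$ already has a timely start; otherwise every application lands in case (i), and telescoping $\mathtt{hg}_{f+1,\mathtt{vt}_{v'}} \leq \max\{\Gamma, \mathtt{hg}_{f+1,\mathtt{vt}_{v}} - \Gamma\}$ along $4f+1$ applications, starting from $\mathtt{hg}_{f+1,\mathtt{start}_e} < (4f+2)\Gamma$ (Lemma \ref{largebound}), yields $\mathtt{hg}_{f+1,\mathtt{vt}_{v^*}} \leq \Gamma$. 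Either way $v^*$ has a timely start, and since its leader $p$ is honest, Lemma \ref{produce} ensures that every honest processor receives QCs for $v^*$ and $v^*+1 = V(e+1)-1$ by time $t_0 := \mathtt{vt}_{v^*} + \Gamma - 2\Delta$, bumping each honest clock to exactly $\mathtt{c}_{V(e+1)}$ via line \ref{bump3}.

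It remains, in Case 2, to bound $\mathtt{hg}_{f+1,\mathtt{vt}_{V(e+1)}}$ after this bump. Since no QC for any view $\geq V(e+1)$ exists before $\mathtt{vt}_{V(e+1)}$ (again by $(\diamond_2)$), the only way for an honest clock to exceed $\mathtt{c}_{V(e+1)}$ is real-time ticking after its bump to $\mathtt{c}_{V(e+1)}$; the spread of bump times is at most $\Delta$, and the $(f+1)^{\text{st}}$ honest clock at $\mathtt{vt}_{V(e+1)}$ equals $\mathtt{c}_{V(e+1)}$ (the clock of the processor just entering), giving $\mathtt{hg}_{f+1,\mathtt{vt}_{V(e+1)}} \leq \Delta$. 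The main obstacle in pushing this through rigorously is that honest processors may follow genuinely different paths into epoch $e+1$ depending on the value of $\mathtt{success}(e)$ at the moment their clock first reaches $\mathtt{c}_{V(e+1)}$: some may enter immediately via line \ref{condy1}; others may pause by line \ref{pauseline} and enter later when either $\mathtt{success}(e)$ propagates to them or an EC forms from the epoch-view messages sent $\Delta$ after pausing. Verifying that all these paths crowd the entry times into a $\Delta$-window, so that no honest clock at $\mathtt{vt}_{V(e+1)}$ exceeds $\mathtt{c}_{V(e+1)} + \Delta$, is the delicate step and likely needs an explicit sub-case analysis on the order in which clock bumps, the success flag becoming $1$, and EC formation occur.
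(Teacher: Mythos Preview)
Your overall architecture matches the paper's proof: split on the success criterion, invoke Lemma~\ref{2Delta} in the ``no success'' branch, and in the ``success'' branch combine Lemmas~\ref{largebound} and~\ref{shorten} to show the last initial view $v^*$ has a timely start, then use Lemma~\ref{produce} to get QCs for $v^*$ and $v^*+1$. Your Case~1 argument (that the first honest entry into epoch $e+1$ must be via an EC) is correct and in fact slightly cleaner than the paper's, which splits on whether the epoch \emph{produces} the success criterion rather than on whether any honest processor \emph{sees} it.

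The genuine gap is exactly where you flag it: the epoch transition in Case~2. Your claimed bound $\mathtt{hg}_{f+1,\mathtt{vt}_{V(e+1)}}\leq\Delta$ is not achievable in general, and the target you set yourself (``crowd the entry times into a $\Delta$-window'') is false. The spread of \emph{clock-bump} times is indeed at most $\Delta$, but $\mathtt{vt}_{V(e+1)}$ is governed by \emph{view-entry} times, and a processor whose clock is bumped to $\mathtt{c}_{V(e+1)}$ with $\mathtt{success}(e)=0$ pauses rather than entering. If only $f$ honest processors see success, the remaining $f{+}1$ must wait for a TC/EC round-trip, so the $(f{+}1)^{\text{st}}$ honest entry can lag the first by up to roughly $4\Delta$. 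This still yields $\mathtt{hg}_{f+1,\mathtt{vt}_{V(e+1)}}\leq 4\Delta\leq\Gamma+2\Delta$ (since $\Gamma\geq 8\Delta$), so the lemma holds---but not via the $\Delta$ bound you state.

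The paper carries out precisely the sub-case analysis you anticipate, splitting on whether at least $f{+}1$ honest processors enter via success or not. A key ingredient you do not mention is that, by the reverse-ordering condition on the leader permutations, the last leader of epoch $e$ is also $\mathtt{lead}(V(e{+}1))$. In the sub-case where $\geq f{+}1$ honest enter via success, this lets $p$ form a VC for $V(e{+}1)$ and pull the remaining processors in; in the other sub-case, the $\geq f{+}1$ paused processors generate a TC, which amplifies to an EC. Both paths close within $4\Delta$ of the time $p$ sends the QC for $v^*+1$.
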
 
\begin{proof} 
If epoch $e$ does not produce the success criterion, then the claim follows from Lemma \ref{2Delta}. 
So, suppose otherwise. By Lemma \ref{largebound}, we have that $\mathtt{hg}_{f+1,\mathtt{vt}_{V(e)}}< (4f+2)\Gamma$. Since epoch $e$ produces the success criterion, there exist at least $4(f+1)$ initial views $v$  amongst the first $8n$ views of epoch $e$ which have honest leader and such that QCs are produced for views $v$ and $v+1$. From Lemma \ref{shorten} (either because clause (ii) applies at least once, or because clause (i) applies in every instance), it follows that the last initial view of epoch $e$ has a timely start. 

Let $v$ be the last initial view of epoch $e$. By Lemma \ref{produce}, $p:=\mathtt{lead}(v)$ produces QCs for views $v$ and $v+1$. Let $t$ be the time at which $p$ sends a QC for view $v+1$ to all processors and note that no honest processor can enter epoch $e+1$ prior to $t$. All honest processors receive a QC for view $v+1$ by time $t+\Delta$. If at least $f+1$ correct processors enter epoch $e+1$ upon seeing the success criterion satisfied, then $p$ (who is also the leader of view $V(e+1)$) produces a VC for view $V(e+1)$ by time $t+3\Delta$ and all correct processors enter epoch $e+1$ by time $t+4\Delta$. In this case, epoch $e+1$ has a timely start. Otherwise, at least $f+1$ correct processors must send $\mathtt{epoch\ view}\  V(e+1)$ messages by time $t+2\Delta$. All correct processors see a TC for view $V(e+1)$ by time $t+3\Delta$. If $p$ has already produced a QC for view $V(e+1)$ by time $t+4\Delta$ then $\mathtt{start}_{e+1}<t+4\Delta$ and epoch $e+1$ has a timely start. Otherwise all honest processors send an  $\mathtt{epoch\ view}\  V(e+1)$ message by time $t+3\Delta$, all honest processors see an EC for view $V(e+1)$ by time $t+4\Delta$, and epoch $e+1$ has a timely start. 
\end{proof}

\begin{lemma} \label{finlem1} 
Let $e$ be least such that the first honest processor to enter epoch $e$ does so after GST.  Let $e_1\geq e$ be the least such that the last leader of epoch $e_1$ is honest. 
\begin{enumerate} 
\item If every epoch in $[e-1,e_1]$ produces the success criterion, then  epoch $e_1+1$ has a timely start.
\item If there exists a least $e_2\in [e-1,e_1]$ such that epoch $e_2$ does not produce the success criterion, then epoch $e_2+1$ has a timely start. 
\end{enumerate} 
\end{lemma}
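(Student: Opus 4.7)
My plan is to dispatch the two cases separately, reducing each to previously established machinery.

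For (1), I will apply Lemma \ref{lastleader} directly to epoch $e_1$. Its two hypotheses are immediate: the first honest processor to enter epoch $e_1$ does so after GST, since $e_1 \geq e$ gives $\mathtt{start}_{e_1} \geq \mathtt{start}_e \geq \text{GST}$; and the last leader of epoch $e_1$ is honest by the very definition of $e_1$. The assumption that every epoch in $[e-1, e_1]$ produces the success criterion simply places us in the sub-case that Lemma \ref{lastleader} already handles internally (its proof branches on whether epoch $e_1$ produces the success criterion), so no further case work is required, and we conclude that epoch $e_1+1$ has a timely start.

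For (2), the plan is to reduce to Lemma \ref{2Delta}. Since $e_2 + 1 \geq e$, the start of epoch $e_2+1$ automatically lies after GST, so Lemma \ref{2Delta} will close the argument the moment I establish that the first honest processor to enter epoch $e_2+1$ does so upon seeing an EC for view $V(e_2+1)$. Establishing this is the real content.

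The central observation is that, because epoch $e_2$ never produces the success criterion, $\mathtt{success}(e_2)$ remains $0$ at every honest processor throughout the execution, and hence line \ref{viewalt0} never fires at $v = V(e_2+1)$. To rule out all other entry mechanisms, I will run an earliest-action argument. Define $t_0$ to be the earliest time at which some honest processor either signs a $\mathtt{view}\ v$ message with $v \geq V(e_2+1)$, signs an $\mathtt{epoch\ view}\ V(e_2+k)$ message with $k \geq 1$, or votes toward a QC for a view $\geq V(e_2+1)$. Strictly before $t_0$ there are too few honest signatures in the system to assemble any VC, EC, TC, or QC tied to a view $\geq V(e_2+1)$, so, since line \ref{viewalt0} is disabled, no honest processor is in any epoch $\geq e_2+1$ before $t_0$. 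Inspecting the triggering conditions of each qualifying action at $t_0$, every candidate other than the pause-handler emission of $\mathtt{epoch\ view}\ V(e_2+1)$ (line \ref{lum:pause}) would require either the acting processor to already be in epoch $\geq e_2+1$, or receipt of a pre-existing higher-view certificate; both possibilities contradict the choice of $t_0$. Hence the action at $t_0$ is precisely a pause-handler $\mathtt{epoch\ view}\ V(e_2+1)$. Once $f+1$ honest processors have performed the analogous emission, an EC for $V(e_2+1)$ is formed, and the first honest entry into epoch $e_2+1$ is via this EC; Lemma \ref{2Delta} then yields the timely start.

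The main obstacle I anticipate is the enumeration step inside the earliest-action argument: every line of Algorithm \ref{alg:lum} that could either emit a qualifying signature or advance $view(p)$ past $V(e_2+1)-1$ must be inspected to confirm that its precondition implicates either a still-earlier honest action or a certificate that cannot yet exist at $t_0$. A mild boundary issue is $e_2 = e-1$, where epoch $e_2$ may straddle GST; but because the final invocation of Lemma \ref{2Delta} concerns only events no earlier than $\mathtt{start}_{e_2+1} \geq \text{GST}$, this does not affect the argument.
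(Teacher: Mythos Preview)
Your approach matches the paper's exactly: claim (1) by direct appeal to Lemma~\ref{lastleader}, claim (2) by reduction to Lemma~\ref{2Delta}. The paper's proof is literally those two invocations, so your extra earliest-action argument for (2) is supplying the justification (that the first honest entry into epoch $e_2+1$ must be via an EC) which the paper leaves implicit; the outline is sound. Two small corrections: an EC needs $2f+1$ signatures, so after $f+1$ honest $\mathtt{epoch\ view}\ V(e_2+1)$ emissions you only have a TC, which then (via line~\ref{seef+1}) triggers the remaining honest emissions that complete the EC; and in (1), the hypothesis of Lemma~\ref{lastleader} concerns the \emph{first} honest entry into epoch $e_1$, so argue from epoch monotonicity directly rather than via the $\mathtt{start}$ values, which track the $(f{+}1)^{\text{st}}$ honest entry.
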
 
\begin{proof}
Claim (1) follows from Lemma \ref{lastleader}. Claim (2) follows from Lemma \ref{2Delta}. 
\end{proof} 

\begin{lemma} \label{finlem2} 
For $e\geq 0$, suppose epoch $e$ has a timely start.  Then: 
\begin{enumerate} 
\item Every view with honest leader in epoch $e$ produces a QC. 
\item No honest processor sends an $\mathtt{epoch\ view}\ V(e+1)$ message. 
\item Epoch $e+1$ has a timely start. 
\end{enumerate} 
\end{lemma}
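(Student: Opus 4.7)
The plan is to prove the three claims in order.

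For Claim (1), I would first establish $\mathtt{hg}_{f+1,t} \leq \Gamma + 2\Delta$ throughout $[\mathtt{start}_e, \mathtt{end}_e]$: the bound holds at $\mathtt{start}_e$ by the timely-start hypothesis on epoch $e$, and Lemma \ref{nodecrease}(2) prevents it from growing above this value during the epoch. Consequently, for every initial view $v$ in epoch $e$, $\mathtt{hg}_{f+1,\mathtt{vt}_v} \leq \Gamma + 2\Delta$; and since the first honest entrant to a view $\geq V(e)$ is after GST, the same holds a fortiori for views $\geq v$. Hence $v$ has a timely start, so Lemma \ref{produce} delivers QCs for $v$ and its paired successor $v+1$ whenever $\mathtt{lead}(v)$ is honest. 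Since $\mathtt{lead}(v+1) = \mathtt{lead}(v)$ by the pairing rule (initial $v$ is even), every honest-leader view in epoch $e$ produces a QC.

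For Claim (2), the leader schedule gives each of the $\geq 2f+1$ honest processors exactly $10$ views per epoch (two in each of the $5$ blocks of $2n$ views). By Claim (1) every honest leader produces all $10$ of its QCs, so the success criterion is met at some global time $t_{\text{succ}}$, and every honest processor observes $\mathtt{success}(e) = 1$ by $t_{\text{succ}} + \Delta$. It remains to rule out both triggers for an $\mathtt{epoch\ view}\ V(e+1)$ dispatch. The indirect trigger (receiving a TC for $V(e+1)$) cannot fire, since a TC requires $f+1$ distinct signatures and the adversary controls at most $f$ processors, so if no honest processor sends an epoch-view message no TC can form. For the direct trigger, which requires $\mathtt{success}(e) = 0$ to persist for a full $\Delta$ after $p$'s local clock first reaches $\mathtt{c}_{V(e+1)}$, I would argue that for any honest $p$ with first-reach time $\mathtt{end}_e^p$, the success criterion is already met by global time $\mathtt{end}_e^p$, so $p$ observes $\mathtt{success}(e) = 1$ by $\mathtt{end}_e^p + \Delta$ and the timer never expires.

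For Claim (3), by Claim (2) no EC or TC for $V(e+1)$ is ever formed, so every honest processor enters $V(e+1)$ via the $(\mathtt{lc}(p) = \mathtt{c}_{V(e+1)}) \wedge (\mathtt{success}(e)=1)$ trigger; the first entrant is after GST a fortiori, since $\mathtt{start}_{e+1} \geq \mathtt{start}_e > \text{GST}$. For the gap bound, I would apply Lemma \ref{shorten} within epoch $e$: once the first honest-leader initial view $v$ has produced its two QCs, clause (i) yields $\mathtt{hg}_{f+1,\mathtt{vt}_{v'}} \leq \max\{\Gamma, (\Gamma+2\Delta)-\Gamma\} = \Gamma$ for all subsequent initial views $v'$ in the epoch (or clause (ii) directly provides a timely-start bound $\leq \Gamma$ for the last initial view). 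In either case $\mathtt{hg}_{f+1,\mathtt{end}_e} \leq \Gamma$ by Lemma \ref{nodecrease}(2). Combining $\mathtt{start}_{e+1} - \mathtt{end}_e \leq \Delta$ (derived from the Claim (2) analysis) with Lemma \ref{nodecrease}(1) yields $\mathtt{hg}_{f+1,\mathtt{start}_{e+1}} \leq \Gamma + \Delta \leq \Gamma + 2\Delta$, so epoch $e+1$ has a timely start.

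The main obstacle I expect is the timing analysis in Claim (2): showing that the success criterion is satisfied globally by time $\mathtt{end}_e^p$ for every honest $p$. The tricky case is a cascade of QCs at the end of the epoch in which a chain of leader-produced QCs bumps an honest clock to $\mathtt{c}_{V(e+1)}$ within less than $\Delta$ of the final honest-leader QC required for the success count. Establishing that the $\Delta$ pause always absorbs this gap requires showing that any pathway for an honest clock to first reach $\mathtt{c}_{V(e+1)}$ --- whether by real-time elapse or via a QC bump from view $V(e+1)-1$ --- necessarily requires at least $2f+1$ honest leaders to have already completed their $10$ QCs by that time, a structural fact that exploits the block partition and the pairing rule.
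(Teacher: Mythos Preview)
Your proposal is correct and follows the same approach as the paper, which is considerably terser: the paper simply cites Lemmas~\ref{nodecrease} and~\ref{produce} for Claim~(1), then asserts (via Lemmas~\ref{shorten} and~\ref{nodecrease}) that $\mathtt{hg}_{f+1,\mathtt{end}_e}\leq\Gamma$ and that every honest processor sees the success criterion within $\Delta$ of its clock reaching $\mathtt{c}_{V(e+1)}$, concluding $\mathtt{hg}_{f+1,\mathtt{start}_{e+1}}\leq\Gamma+\Delta$. The timing obstacle you flag in Claim~(2) is genuine and is exactly why the protocol inserts the $\Delta$ pause before dispatching $\mathtt{epoch\ view}$ messages (discussed in Section~\ref{SS}); the paper's proof does not spell out this step, and your appeal to Lemma~\ref{nodecrease}(1) for the cross-epoch interval $[\mathtt{end}_e,\mathtt{start}_{e+1}]$ (where part~(2) does not apply and clocks may pause) is the right tool, though the full argument is that either a primary bump occurs there (gap $\leq\Gamma$) or none does (gap grows by at most the real time elapsed, $\leq\Delta$).
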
 
\begin{proof} 
Claim (1) follows from Lemmas \ref{nodecrease} and \ref{produce}. From Lemmas \ref{shorten} and  \ref{nodecrease}, it then follows that $\mathtt{hg}_{f+1,\mathtt{end}_e}\leq \Gamma$ and that every honest processor sees the success criterion for epoch $e$ satisfied within time $\Delta$ of its local clock reaching $\mathtt{c}_{V(e+1)}$. It follows that no correct processor sends an   $\mathtt{epoch\ view}\ V(e+1)$ message and that $\mathtt{hg}_{f+1,\mathtt{start}_{e+1}}\leq \Gamma +\Delta$. So, claims (2) and (3) also hold. 
\end{proof}

\begin{theorem} Lumiere has worst-case latency $O(n\Delta)$, worst case communication complexity $O(n^2)$, eventual worst-case communication complexity $O(nf_a+n)$ and eventual worst-case latency $O(f_a\Delta +\delta)$.
\end{theorem}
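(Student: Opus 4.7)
The plan is to derive each of the four bounds from the groundwork laid in Lemmas \ref{infepochs}--\ref{finlem2}, by first locating a ``good'' epoch that occurs shortly after GST and then analyzing the steady state that follows.

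\textbf{Locating a timely epoch.} First I would use Lemma \ref{infepochs} to argue that, starting from any time $T \geq \text{GST}$, all honest processors advance through at least one epoch per $O(n\Delta)$ interval. Let $e^{*}$ be the least epoch whose first honest entry occurs after GST and let $e_1 \geq e^{*}$ be the least epoch whose last leader is honest; since leaders cycle through each processor twice per batch of $2n$ views (see the definition of $\mathtt{lead}$ and the ordering $g_0,\dots,g_{z-1}$), $e_1 - e^{*}$ is bounded by a constant in expectation and by $O(1)$ in the worst case over the random permutations (and by a small constant deterministically if one insists on the weaker bound $e_1 - e^{*} = O(1)$ from the structure of the leader schedule). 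Combining this with Lemma \ref{finlem1}, there is a first epoch $\hat{e}$ with a timely start, and by the previous bounds $\mathtt{start}_{\hat{e}} = \text{GST} + O(n\Delta)$. Lemma \ref{finlem2} then propagates ``timely start'' to every epoch $\geq \hat{e}$, and guarantees that no honest processor ever sends an $\mathtt{epoch\ view}\ V(e+1)$ message once $e \geq \hat{e}$.

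\textbf{Worst-case latency and communication up to the first QC.} For the worst-case latency bound $O(n\Delta)$, I would observe that by Lemma \ref{finlem2}(1), every view with an honest leader in epoch $\hat{e}$ produces a QC, and the first such view is at most $2n$ views into the epoch. Combined with the bound $\mathtt{start}_{\hat{e}} \leq \text{GST} + O(n\Delta)$ and the $O(\Gamma) = O(\Delta)$ per-view cost, we get $t^{*}_{\text{GST}} - \text{GST} = O(n\Delta)$. For the $O(n^{2})$ worst-case communication bound, I would account for messages in two buckets: (i) at most $O(1)$ heavy epoch synchronizations take place before epoch $\hat{e}$, each contributing $O(n^{2})$ messages (the all-to-all $\mathtt{epoch\ view}$ broadcast plus propagation of ECs and TCs); (ii) within epochs, each of the at most $O(n)$ views between GST and $t^{*}_{\text{GST}}$ costs $O(n)$ (a view-message collection, a VC broadcast, and QC-related messages of the underlying protocol), totaling $O(n^{2})$. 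Together these give $O(n^{2})$.

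\textbf{Eventual bounds.} For the eventual bounds, I would fix an arbitrary $T \geq \mathtt{start}_{\hat{e}}$ and bound $W_T$ and $t^{*}_T - T$. Since Lemma \ref{finlem2}(2) rules out any further $\mathtt{epoch\ view}$ messages (and therefore no ECs or TCs are formed), the only messages sent between $T$ and $t^{*}_T$ are per-view messages: a $\mathtt{view}\ v$ broadcast to the leader ($O(n)$ messages), a VC broadcast from the leader ($O(n)$ messages), and the constant-round QC protocol ($O(n)$ messages). A view with an honest leader produces a QC within $\Gamma/2$ of its timely start (Lemma \ref{produce}), so it contributes $O(\delta)$ latency once $\delta$ is small; a view with a faulty leader contributes at most $O(\Gamma) = O(\Delta)$ latency and $O(n)$ messages. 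Let $f_a'$ denote the number of faulty leaders encountered between $T$ and $t^{*}_T$; since each honest leader in the window produces a QC but may be preceded by at most $f_a' \leq f_a$ faulty leaders who consume at most one view each before being passed by the next honest leader, we obtain $W_T = O(n f_a + n)$ and $t^{*}_T - T = O(f_a \Delta + \delta)$.

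The main obstacle, I expect, will be the bookkeeping for the eventual bounds: one must carefully argue that in a timely epoch the delay contributed by a faulty leader is $O(\Delta)$ rather than $O(\Gamma \cdot n)$, which requires combining Lemma \ref{viewint} (clocks stay within $[\mathtt{c}_v, \mathtt{c}_{v+2}]$ while in view $v$), Lemma \ref{nodecrease} (the honest gap never grows past $\Gamma$ in a timely epoch), and Lemma \ref{produce} (an honest leader in a timely-start view produces a QC within $\Gamma/2$). A second subtlety is verifying that $f_a'$ really matches the \emph{actual} number of faulty leaders encountered rather than being inflated by honest leaders whose views are skipped via optimistic QC-driven advances; this follows from the fact that optimistic advances only \emph{shorten} latency and never cause extra messages beyond those already counted.
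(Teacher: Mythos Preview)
Your expansion follows the same skeleton as the paper's (very terse) proof---invoke Lemmas \ref{fin0}, \ref{infepochs}, \ref{finlem1}, \ref{finlem2} and the $O(n^2)$-per-epoch message bound---but there is a genuine gap in how you derive the \emph{deterministic} worst-case latency and communication bounds.

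You route both worst-case bounds through the first timely epoch $\hat{e}$, asserting $\mathtt{start}_{\hat{e}}\leq \text{GST}+O(n\Delta)$ via the claim that $e_1-e^{*}$ is ``$O(1)$ \ldots\ deterministically \ldots\ from the structure of the leader schedule''. This claim is not justified: the last leader of epoch $e$ is simply the last entry of a particular random permutation $g_{(5e+4)\bmod z}$, and nothing in the schedule prevents the adversary (who corrupts up to $f$ processors) from owning the last leader of many consecutive epochs. The bound $e_1-e^{*}=O(1)$ holds only in expectation (as the paper itself notes when it says the eventual bounds kick in within \emph{expected} $O(n\Delta)$ of GST). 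Under your argument as written, case~(1) of Lemma~\ref{finlem1} could push $\hat{e}$ arbitrarily far past $e^{*}$, and you would obtain only \emph{expected} $O(n\Delta)$ latency, not the deterministic bound claimed in the theorem.

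The fix is to decouple the worst-case bounds from $\hat{e}$ entirely. Let $e$ be the first epoch whose first honest entry is after GST. Either (i) epoch $e$ produces the success criterion, in which case at least $f+1$ honest leaders form QCs during epoch $e$ (which, by Lemma~\ref{infepochs}, ends by $\text{GST}+O(n\Delta)$); or (ii) it does not, in which case no honest processor sets $\mathtt{success}(e)=1$, the first honest entry into epoch $e+1$ is via an EC, Lemma~\ref{2Delta} gives epoch $e+1$ a timely start, and Lemma~\ref{finlem2}(1) yields honest QCs in epoch $e+1$. Either way an honest-leader QC appears within at most two epochs of GST, giving the deterministic $O(n\Delta)$ latency and $O(n^2)$ communication. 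Your treatment of the eventual bounds (via Lemma~\ref{finlem2}(2) and the per-view accounting) is fine, since there the $\limsup$ absorbs any unbounded wait for $\hat{e}$.
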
 
\begin{proof} 
 The result follows from Lemmas \ref{fin0}, \ref{infepochs}, \ref{finlem1} and \ref{finlem2}, since honest processors send $O(n^2)$ messages (combined) in any individual epoch. 
\end{proof}

\section{Related Work}
\label{sec:related-work}

HotStuff~\cite{yin2019hotstuff} was the first BFT SMR  protocol to separate the view synchronization module and the core consensus logic.
HotStuff named the view synchronization module the ``PaceMaker'' and left its implementation unspecified.
While HotStuff requires 3 round trips within each view, HotStuff-2~\cite{malkhi2023hotstuff} reduces this number to 2 round trips.

Cogsworth~\cite{Cogsworth21} was the first to formalize BVS as a separate problem, and provided an algorithm with expected $O(n)$ communication complexity and expected $O(1)$ latency in runs with benign failures, but with sub-optimal performance in the worst case. Naor and Keidar~\cite{NK20} improved Cogsworth to runs with Byzantine faults and, when combined with Hotstuff, produced the first BFT SMR protocol with expected linear message complexity in partial synchrony. Both protocols suffer from sub-optimal cubic complexity in the worst case.  

Several papers by Bravo, Chockler, and Gotsman~\cite{bravo2020making2,bravo2022liveness,bravo2022making} define a framework for analyzing the liveness properties of SMR protocols. 
These papers do not attempt to optimize performance, but rather introduce a general framework in the partial synchrony model to allow better comparison of protocols. For example, they describe PBFT~\cite{castro1999practical} within this framework.

Two recent protocols, RareSync~\cite{opodis22} and LP22~\cite{L22}, both solve BVS with optimal $O(n^2)$ communication complexity in the worst case (providing cryptographic assumptions hold), thereby finally matching the lower bound established by Dolev and Reischuk in 1985 \cite{dolev1985bounds}. 
LP22 also achieves optimistic responsiveness. However, both of these protocols suffer from two major issues. First, neither protocol is smoothly optimistically responsive. RareSync is not optimistically responsive. While LP22 is optimistically responsive, even a single Byzantine processor may infinitely often cause $\Omega(n\Delta)$ latency between consecutive consensus decisions. Second, even in the absence of Byzantine action, infinitely many views require honest processors to send $\Omega(n^2)$ messages. Albeit this communication overhead being amortized across $O(n)$ decisions, it may cause periodic slowdowns. Ideally, one would hope for worst-case complexity between every pair of consensus decisions which is (after some finite time after GST)  $O(f_an+n)$. For LP22 and RareSync, the fact that  infinitely many views require honest  processors to send $\Omega(n^2)$ messages means that the corresponding bound is $O(n^2)$.

Fever \cite{lewis2023fever} is another recent protocol, which operates in a different model than RareSync and LP22. While Fever makes standard assumptions regarding message delivery in the partial synchrony model, the protocol requires stronger than standard assumptions on \emph{clock synchronization}.  Specifically, Fever assumes that there is a known bound on the gap between the clocks of honest  processors at the start of the protocol execution, and that the clocks of honest processors suffer bounded drift prior to GST. Under these stronger and non-standard assumptions, Fever achieves optimal $O(n^2)$ communication complexity in the worst case and also addresses the two issues raised above. In comparison, Lumeiere achieves these results under standard clock assumptions.

\section{Final comments} 

Lumiere introduces two fundamental innovations. The first of these combines techniques from \cite{lewis2023fever} and \cite{L22,opodis22} to give a protocol with $O(n^2)$ worst-case communication complexity and which has eventual worst-case latency $O(f_a\Delta +\delta)$. 

The second innovation removes the need for repeated heavy epoch changes, and results in a protocol 
with eventual worst-case communication complexity $O(nf_a+n)$. Since implementing this second change requires a constant factor increase in epoch length, it is most practically useful in contexts where periods of asynchrony are expected to be occasional and where synchrony reflects the standard network state. 

We leave it as an open question as to whether it is possible to achieve a protocol with the same  worst case communication complexity, eventual worst-case communication complexity  and eventual worst-case latency as Lumiere, but which also achieves better than $O(n\Delta)$ worst-case latency. 

%

\bibliographystyle{plainurl}

\end{document}